\documentclass[11pt]{article}

\usepackage{amsmath,amssymb,mathtools,bm}
\usepackage{booktabs,array,tabularx}
\usepackage[font=small,labelfont=bf]{caption}
\usepackage{pgfplots}
\usepackage{flafter}
\usepackage{placeins}
\usepackage[margin=1in]{geometry}
\usepackage[numbers,sort&compress]{natbib}
\usepackage{amsthm}
\usepackage[hidelinks]{hyperref}
\usepackage{url}
\pgfplotsset{compat=1.18}
\theoremstyle{plain}
\newtheorem{theorem}{Theorem}[section]
\newtheorem{proposition}[theorem]{Proposition}
\newtheorem{corollary}[theorem]{Corollary}

\theoremstyle{definition}
\newtheorem{definition}[theorem]{Definition}

\newtheorem{example}[theorem]{Example}
\numberwithin{equation}{section}

\DeclareMathOperator{\Law}{Law}
\DeclareMathOperator{\MMD}{MMD}

\DeclareMathOperator{\argmin}{argmin}
\newcommand{\R}{\mathbb R}
\newcommand{\Z}{\mathbb Z}
\newcommand{\E}{\mathbb E}
\newcommand{\Prob}{\mathbb P}
\newcommand{\1}{\mathbf 1}
\newcommand{\calA}{\mathcal A}

\newcommand{\calC}{\mathcal C}
\newcommand{\calD}{\mathcal D}
\newcommand{\calE}{\mathcal E}

\newcommand{\calH}{\mathcal H}

\newcommand{\calQ}{\mathcal Q}
\newcommand{\calS}{\mathcal S}
\newcommand{\calT}{\mathcal T}
\newcommand{\calX}{\mathcal X}
\newcommand{\calY}{\mathcal Y}
\newcommand{\calW}{\mathcal W}
\newcommand{\full}{\mathrm{full}}
\newcommand{\quot}{\mathrm{quot}}

\begin{document}

\title{Causal inference for group-contaminated structured outcomes:\\
observable quotients, lossless reduction and exact randomization inference}
% Insert the complete author names and affiliations before public upload.
\author{
	Usef Faghihi$^{\dagger}$ \qquad Amir Saki$^{\dagger}$\\[0.5em]
	\small Département de mathématiques et d'informatique\\
	\small Université du Québec à Trois-Rivières\\
	\small Trois-Rivières, Québec, Canada\\[0.3em]
	\small $^{\dagger}$These authors contributed equally to this work.\\
	\small Usef Faghihi: \texttt{usef.faghihi@uqtr.ca}\\
	\small Amir Saki: \texttt{amir.saki@uqtr.ca}
}
\maketitle

\begin{abstract}
Structured potential outcomes such as microscopy images may be recorded after an unknown, unit-specific transformation. If that transformation can depend on treatment, covariates or the intrinsic outcome, raw-coordinate analyses may mix biological effects with acquisition geometry. We study the unrestricted observation model $X=\Gamma\cdot Y(A)$ and characterize its observable information: a target is uniformly recoverable exactly when it is constant on group orbits, while a Borel maximal invariant retains every measurable invariant target. We then distinguish observability from statistical losslessness. A quotient-faithful reconstruction theorem shows that quotient reduction is sufficient for the full transformed experiment exactly when the conditional law of the raw observation given treatment, covariates and the quotient has a parameter-free version. Conditional Haar contamination on a compact group yields Blackwell equivalence as a special case; it is not imposed in the main model. We also separate independent site-specific product actions from shared diagonal actions and show why componentwise canonicalization can discard relative cross-site information. Under explicit metric and kernel regularity, an approximate-contamination theorem bounds quotient-law Wasserstein error and the induced perturbation of population maximum mean discrepancy. For finite-support multichannel lattice images, we construct a maximal invariant under integer translations and quarter turns, combine its characteristic Gaussian kernel with a complete paired-swap test, and retain the original simulations and RxRx1 HUVEC study. Under the sharp null, the quotient test rejected in 0.052 of simulation replicates; at unit effect strength its power was 0.992. The primary RxRx1 contrast had an enumerated paired-swap $p$-value of 0.0078.
\end{abstract}

\noindent\textbf{Keywords:} causal inference; group action; maximal invariant; Blackwell sufficiency; microscopy image; quotient space; randomization test; structured outcome.

\section{Introduction}\label{sec:introduction}

Many experimental outcomes are structured objects rather than scalars. An image, shape, network or function may be represented in a coordinate system that varies from unit to unit. In microscopy, for example, translation or orientation can arise from acquisition and preprocessing rather than from the biological intervention. We study the observation model
\begin{equation}\label{eq:observation}
X=\Gamma\cdot Y(A),
\end{equation}
where $A$ is treatment, $Y(a)$ is the intrinsic structured potential outcome and $\Gamma$ is an unobserved group element. Crucially, no independence between $\Gamma$ and treatment, covariates or potential outcomes is imposed. Averaging over a posited nuisance distribution is therefore neither required nor generally available.

The first question is an information boundary: which targets of $Y(a)$ are determined by the transformed observation? Orbit invariance gives an exact answer. The second question is causal: under what design or exchangeability conditions does the observed quotient distribution identify its interventional counterpart? A third, logically stronger question is statistical: when does discarding the within-orbit coordinate lose no information about the full family of observed-data laws? Maximal invariance answers the first question, but not the third. Conflating those statements would overstate what quotienting achieves.

This paper develops a theorem-to-algorithm-to-application account of these three layers. The principal additions to the original outcome-level framework are:
\begin{enumerate}
\item a sharp sufficiency boundary based on a common conditional law and a quotient-faithful reconstruction kernel;
\item a conditional-Haar Blackwell-equivalence corollary, stated only as a special case of the unrestricted contamination framework;
\item a product-versus-diagonal action result that identifies when sitewise canonicalization is appropriate and when it erases relative configuration;
\item an approximate-contamination stability result under explicit orbit-metric and reproducing-kernel assumptions.
\end{enumerate}

The individual ingredients---maximal invariants, measurable factorization, statistical sufficiency, Blackwell comparison, Haar measure, characteristic kernels and Fisher randomization---are classical \citep{Blackwell1953,Eaton1989,Wijsman1990,Fisher1935}. The contribution claimed here is their outcome-level causal synthesis under unrestricted latent group contamination, together with an explicit noncompact lattice construction and a finite-sample inferential workflow. We do not claim novelty for those classical ingredients separately.

Existing causal methods for metric- and random-object outcomes generally take the structured outcome as directly observed \citep{Kurisu2024,Shin2024,Bhattacharjee2025}; registration-based functional methods address a related preprocessing problem \citep{Raykov2025}. Topological transforms can be injective on specified shape classes when sufficiently rich directional information is available \citep{Turner2014,Curry2022}. Our setting is complementary: the observation equation itself contains a latent, unit-specific action, and the target is the full observable orbit rather than a chosen low-dimensional summary.

The application uses two-site, six-channel RxRx1 wells. Under separate site-specific acquisition motions the nuisance group is a product of lattice rigid-motion groups. Translation normalization and finite rotational minimization yield a measurable maximal invariant without placing a probability law on the noncompact translation group. A Gaussian kernel on canonical representatives is characteristic for quotient laws, and complete enumeration of the reported paired assignment mechanism yields finite-sample inference for the quotient Fisher sharp null.

Section~\ref{sec:observable} develops observability and quotient-law identification. Section~\ref{sec:lossless} gives the statistical-losslessness boundary and Haar special case. Section~\ref{sec:multi} distinguishes product and diagonal actions. Section~\ref{sec:approx} proves stability under approximate contamination. Sections~\ref{sec:lattice} and~\ref{sec:randomization} give the image construction and exact test. Sections~\ref{sec:simulation} and~\ref{sec:rxrx1} report the preserved empirical results. Proof details that would interrupt the main argument appear in Appendix~\ref{sec:appendix}.

\section{Observable quotient targets and causal identification}\label{sec:observable}

\subsection{Measurable group contamination}

Let $\calY,\calC,\calS$ and $G$ be standard Borel spaces. Assume that $G$ is a measurable group and acts on $\calY$ through a jointly Borel map $(g,y)\mapsto g\cdot y$. The orbit of $y$ is $[y]_G=\{g\cdot y:g\in G\}$. A Borel map $\tau:\calY\to\calS$ is invariant if $\tau(g\cdot y)=\tau(y)$ for all $(g,y)$.

Let $C$ denote the observed pretreatment covariates, taking values in the standard Borel space $\calC$, and let $A$ denote the realized treatment assignment, taking values in a finite set $\calA$. For each $a\in\calA$, let $Y(a)$ be the intrinsic potential outcome under treatment $a$ in the potential-outcomes framework \citep{Rubin1974}. We observe $O=(C,A,X)$, where $X$ satisfies~\eqref{eq:observation}. The conditional distribution of $\Gamma$ given $C$, $A$ and the potential outcomes is unrestricted.

\begin{theorem}[Sharp observability]\label{thm:observability}
For a Borel target $\tau:\calY\to\calS$, the following are equivalent:
\begin{enumerate}
\item there is a Borel decoder $D:\calY\to\calS$ such that $D(g\cdot y)=\tau(y)$ for every $g\in G$ and $y\in\calY$;
\item $\tau$ is $G$-invariant.
\end{enumerate}
\end{theorem}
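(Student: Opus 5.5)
Both implications are elementary once the identity element of $G$ is used, so the plan is to prove each direction directly, with no appeal to selection theorems.

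\textbf{(1) $\Rightarrow$ (2).} Suppose a Borel decoder $D$ satisfies $D(g\cdot y)=\tau(y)$ for every $g\in G$ and $y\in\calY$.
\begin{itemize}
\item Taking $g=e$ gives $D(y)=\tau(y)$ for every $y$, so $D=\tau$ on all of $\calY$.
\item For arbitrary $(g,y)$ we then have $\tau(g\cdot y)=D(g\cdot y)=\tau(y)$, which is exactly $G$-invariance.
\end{itemize}
Equivalently, $\tau(g\cdot y)$ and $\tau(y)$ are both equal to $D$ evaluated on the same orbit, and $D$ is constant there. Borel measurability plays no role in this direction.

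\textbf{(2) $\Rightarrow$ (1).} Suppose $\tau$ is $G$-invariant and set $D:=\tau$.
\begin{itemize}
\item $D$ is Borel because $\tau$ is Borel by hypothesis.
\item Invariance gives $D(g\cdot y)=\tau(g\cdot y)=\tau(y)$ for all $g$ and $y$, which is condition (1).
\end{itemize}

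\textbf{Where the difficulty could lie.} There is essentially no obstacle; the main point is to be careful about what "uniformly recoverable" means. The decoder acts only on the observed $X=\Gamma\cdot Y(A)$, and the condition in (1) must hold for every $g$, because the law of $\Gamma$ is unrestricted. This allows $\Gamma$ to be a point mass at any $g$, possibly depending on $Y$, so "for every $g$" is the right requirement.

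A subtler version would make the decoder a function of a maximal invariant rather than of $X$ itself. That version would need measurable factorization through a Borel maximal invariant, for example via a Borel transversal or the Lusin–Souslin theorem on standard Borel spaces. The present statement, however, lets the decoder act on $\calY$ directly, so this issue does not arise. I would note this in a remark and defer the factorization result to the later statement about Borel maximal invariants.
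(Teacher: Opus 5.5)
Your proof is correct and follows the paper's argument exactly: setting $g=e$ forces $D=\tau$, which gives invariance, and conversely an invariant $\tau$ serves as its own Borel decoder. Your closing remarks on uniform recoverability and on deferring measurable factorization to the later maximal-invariant result are accurate, but this statement does not need them.
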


\begin{proof}
If $\tau$ is invariant, take $D=\tau$. Conversely, the identity element gives $D(z)=\tau(z)$ for every $z$. Hence $\tau(g\cdot y)=D(g\cdot y)=\tau(y)$.
\end{proof}

\begin{corollary}[Nonidentifiability outside the quotient]\label{cor:nonid}
If $\tau$ is not invariant, two data-generating processes satisfying~\eqref{eq:observation} can induce the same observed law while giving different laws of $\tau\{Y(a)\}$ for any fixed $a$.
\end{corollary}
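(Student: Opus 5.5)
Since $\tau$ is not invariant, I will start from a witness $y_0\in\calY$ and $g_0\in G$ with $\tau(g_0\cdot y_0)\neq\tau(y_0)$. Write $y_1=g_0\cdot y_0$. The plan is to build two data-generating processes on a common trivial covariate and treatment structure. Fix any $c_0\in\calC$. Put a point mass on $C=c_0$, and fix the treatment $a$ named in the statement. If $\calA$ has other elements, assign them positive probability as well, so that nothing depends on degenerate assignment. In both processes the potential outcomes are taken constant across treatments:
\begin{itemize}
\item Process 1 uses $Y(b)=y_0$ for every $b\in\calA$ and $\Gamma=g_0$.
\item Process 2 uses $Y(b)=y_1$ for every $b\in\calA$ and $\Gamma=e$, the identity.
\end{itemize}
Treatment is independent of everything else in both, with the same assignment law.

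First I will check that both processes satisfy \eqref{eq:observation}. In Process 1 we get $X=g_0\cdot y_0=y_1$, and in Process 2 we get $X=e\cdot y_1=y_1$. Consequently $O=(C,A,X)$ has the same law $\delta_{c_0}\otimes\Law(A)\otimes\delta_{y_1}$ in both.

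Since the model leaves the conditional law of $\Gamma$ unrestricted, both choices are admissible. In particular, a degenerate $\Gamma$ is allowed, and so is the randomization and exchangeability structure. If one also wants $\Gamma$ to depend on potential outcomes, that is permitted but not needed here.

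Next I will compare the target laws. Under Process 1 the law of $\tau\{Y(a)\}$ is $\delta_{\tau(y_0)}$, and under Process 2 it is $\delta_{\tau(y_1)}$. These Dirac masses differ because $\tau(y_0)\neq\tau(y_1)$ and singletons are Borel in the standard Borel space $\calS$. This holds for whichever fixed $a$ is chosen, because the outcomes were made constant in the treatment.

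I expect the only real subtlety to be measurability and admissibility of the constructed processes. Constant maps and Dirac laws are trivially Borel, so this should be routine. If a nondegenerate version is preferred, one could also mix both processes with a common arbitrary component without changing the conclusion. As a remark, Theorem~\ref{thm:observability} already shows that no decoder exists pointwise; the construction above upgrades this to a statement about laws.
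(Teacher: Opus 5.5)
Your construction is correct and is the paper's own argument: choose $y,g$ with $\tau(g\cdot y)\ne\tau(y)$, take $(Y(a),\Gamma)=(y,g)$ in one process and $(g\cdot y,e)$ in the other, so both yield $X=g\cdot y$ while the laws of $\tau\{Y(a)\}$ differ. You also make the potential outcomes constant across all treatments and fix a common law for $(C,A)$, which fills in what the paper leaves implicit: the full observed law of $(C,A,X)$ agrees even when $A\neq a$ has positive probability.
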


\begin{proof}
Choose $y$ and $g$ with $\tau(g\cdot y)\ne\tau(y)$. In one process set $Y(a)=y$ and $\Gamma=g$; in the other set $Y(a)=g\cdot y$ and $\Gamma=e$. Both give $X=g\cdot y$, but their target values differ.
\end{proof}

\subsection{Maximal invariant and factorization}

\begin{definition}
A Borel map $M:\calY\to\calS$ is a maximal invariant when
\[
M(y)=M(z)\quad\Longleftrightarrow\quad z=g\cdot y\text{ for some }g\in G.
\]
\end{definition}

\begin{theorem}[Borel factorization]\label{thm:factorization}
Let $\calY,\calS,\calT$ be standard Borel spaces and $M:\calY\to\calS$ a Borel maximal invariant. Every Borel invariant $h:\calY\to\calT$ admits a Borel factorization $h=\bar h\circ M$. The restriction of $\bar h$ to $M(\calY)$ is unique.
\end{theorem}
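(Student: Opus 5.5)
The plan is to reduce the claim to the Doob--Dynkin lemma, after first showing that every $G$-invariant Borel set is of the form $M^{-1}(B)$ for some Borel $B\subseteq\calS$. The one genuinely descriptive-set-theoretic point is that $M(\calY)$ and images of Borel sets under $M$ are in general only analytic, not Borel. So $\bar h$ cannot be defined naively on $M(\calY)$ and then declared Borel. The Lusin separation theorem is what will bypass this.

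\emph{Step 1 (fibres are orbits, uniqueness).} By the definition of a maximal invariant, $M(y)=M(z)$ holds iff $z\in[y]_G$. Since $h$ is invariant, $h$ is therefore constant on every fibre $M^{-1}(\{s\})$ with $s\in M(\calY)$. Consequently any two factorizations $\bar h_1\circ M=\bar h_2\circ M$ agree at every $s=M(y)$, which gives uniqueness of the restriction to $M(\calY)$ immediately. Existence of a set-theoretic factor is also clear; what remains is measurability.

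\emph{Step 2 (invariant Borel sets are $M$-measurable).} Let $E\subseteq\calY$ be a Borel set that is $G$-invariant, i.e.\ a union of orbits, equivalently of $M$-fibres. Then $M(E)$ and $M(\calY\setminus E)$ are analytic subsets of the standard Borel space $\calS$, being Borel images of Borel sets. They are also disjoint: a common point would be a fibre meeting both $E$ and its complement, contradicting saturation. By the Lusin separation theorem there is a Borel $B\subseteq\calS$ with $M(E)\subseteq B$ and $B\cap M(\calY\setminus E)=\emptyset$. It follows that $E=M^{-1}(B)$. Hence the $\sigma$-algebra of invariant Borel sets coincides with $\sigma(M)=\{M^{-1}(B):B\text{ Borel}\}$; the reverse inclusion is trivial. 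I expect this separation step to be the main obstacle, since it is the only place where standardness of $\calY$ and $\calS$ is really used.

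\emph{Step 3 (Doob--Dynkin).} For each Borel $U\subseteq\calT$, the set $h^{-1}(U)$ is Borel and invariant, so by Step~2 it lies in $\sigma(M)$. Thus $h$ is $\sigma(M)$-measurable. Fix a Borel isomorphism $\iota$ of $\calT$ onto a Borel subset $T_0\subseteq[0,1]$; this is possible because $\calT$ is standard Borel. Apply the real-valued Doob--Dynkin lemma to $\iota\circ h$, approximating by $\sigma(M)$-simple functions $\sum_k c_k\1_{M^{-1}(B_k)}$ and passing to a limsup. This yields a Borel $f:\calS\to[0,1]$ with $\iota\circ h=f\circ M$. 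Next set $\bar h=\iota^{-1}\circ f$ on the Borel set $f^{-1}(T_0)$ and $\bar h\equiv t_0$ (a fixed point of $\calT$) elsewhere. This $\bar h$ is Borel, and since $f(M(y))\in T_0$ for all $y$, we obtain $\bar h\circ M=h$. Combined with Step~1, this completes the argument.
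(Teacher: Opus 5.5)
Your proof is correct and takes essentially the same route as the paper: you show $h^{-1}(B)\in\sigma(M)$ by using Lusin separation on the disjoint analytic images $M(E)$ and $M(\calY\setminus E)$, then apply the measurable factorization (Doob--Dynkin) lemma for the standard Borel target, with uniqueness on $M(\calY)$ following because $h$ is constant on the fibres of $M$. The differences are only presentational: you state the separation step for every invariant Borel set rather than just $E=h^{-1}(B)$, and you prove the factorization lemma yourself via a Borel embedding of $\calT$ into $[0,1]$, where the paper cites it.
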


The proof, given in Appendix~\ref{sec:appendix}, uses saturation of $h^{-1}(B)$, Lusin separation of two disjoint analytic images and the measurable factorization lemma; these are standard tools of descriptive set theory \citep{Kechris1995}. Thus $M$ retains every measurable invariant target; it does not follow that $M(X)$ is sufficient for a statistical parameter.

\subsection{Identification of the interventional quotient law}
For each $a\in\calA$, define the quotient potential outcome
\[
Q(a)=M\{Y(a)\},
\]
and define the observed quotient outcome by $Q=M(X)$. Invariance then gives
\begin{equation}\label{eq:quot-consistency}
	Q
	=
	M(X)
	=
	M\{\Gamma\cdot Y(A)\}
	=
	M\{Y(A)\}
	=
	Q(A)
\end{equation}
without imposing any restriction on the conditional distribution of $\Gamma$.

\begin{theorem}[Identification of the full quotient law]\label{thm:gformula}
Suppose, for every $a\in\calA$, that quotient consistency~\eqref{eq:quot-consistency} holds, $Q(a)\perp\!\!\!\perp A\mid C$, and $\Prob(A=a\mid C)>0$ almost surely. Then, for every Borel $B\subseteq\calS$,
\begin{equation}\label{eq:gformula}
\Prob\{Q(a)\in B\}=\int \Prob(Q\in B\mid A=a,C=c)\,dP_C(c).
\end{equation}
Consequently, the entire interventional law of the maximal invariant is identified.
\end{theorem}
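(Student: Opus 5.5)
The plan is the standard g-formula argument, applied to the quotient potential outcome $Q(a)=M\{Y(a)\}$ rather than to a scalar outcome. Fix $a\in\calA$ and a Borel set $B\subseteq\calS$.

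\textbf{Step 1 (measurability).} Since $M$ is Borel, $Q(a)$ and $Q=M(X)$ are $\calS$-valued random elements. Hence $\{Q(a)\in B\}$ and $\{Q\in B\}$ are events.

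\textbf{Step 2 (tower property).} By the tower property,
\[
\Prob\{Q(a)\in B\}=\E\bigl[\Prob\{Q(a)\in B\mid C\}\bigr].
\]
Take a regular conditional version of the inner probability; this exists because $\calC$ and $\calS$ are standard Borel.

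\textbf{Step 3 (exchangeability).} The assumption $Q(a)\perp\!\!\!\perp A\mid C$ gives, $P_C$-almost surely,
\[
\Prob\{Q(a)\in B\mid C\}=\Prob\{Q(a)\in B\mid A=a,C\}.
\]
The right-hand side is well defined, up to a $P_C$-null set, because positivity $\Prob(A=a\mid C)>0$ holds almost surely.

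\textbf{Step 4 (consistency).} On the event $\{A=a\}$, quotient consistency \eqref{eq:quot-consistency} gives $Q=Q(A)=Q(a)$. Therefore
\[
\Prob\{Q(a)\in B\mid A=a,C\}=\Prob(Q\in B\mid A=a,C).
\]
This step uses no restriction on the law of $\Gamma$, because \eqref{eq:quot-consistency} comes from invariance of $M$ alone.

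\textbf{Step 5 (integration).} Integrating against $P_C$ yields \eqref{eq:gformula}. The right-hand side depends only on the law of the observed triple $(C,A,Q)$, and $Q$ is a Borel function of $X$. So it is a functional of the observed-data law. Since the identity holds for every Borel $B$ and Borel sets determine a probability measure on $\calS$, the whole interventional law of $Q(a)$ is identified.

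\textbf{Main obstacle.} The only real subtlety is Step 3: making the conditional probability given $\{A=a\}$ and $C$ rigorous, and showing it does not depend on the version chosen. I would handle this by writing
\[
\Prob(Q\in B\mid A=a,C)=\frac{\E\bigl[\1\{Q\in B\}\1\{A=a\}\mid C\bigr]}{\Prob(A=a\mid C)}.
\]
This ratio is well defined almost surely by positivity. Conditional independence then gives
\[
\E\bigl[\1\{Q(a)\in B\}\1\{A=a\}\mid C\bigr]=\Prob\{Q(a)\in B\mid C\}\,\Prob(A=a\mid C).
\]
Dividing by $\Prob(A=a\mid C)$ gives Step 3. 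Any two versions agree $P_C$-almost everywhere, so the integral in \eqref{eq:gformula} is unaffected by the choice of version.
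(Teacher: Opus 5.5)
Your proposal is correct and follows essentially the same route as the paper, which also proves the formula through iterated expectation, conditional exchangeability, positivity and consistency, using regular conditional probabilities on standard Borel spaces and the fact that agreement on every Borel $B$ determines the law on $\calS$. Your explicit ratio definition of $\Prob(Q\in B\mid A=a,C)$ supplies the version-independence detail that the paper's terse proof leaves implicit.
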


\begin{proof}
Standard Borel spaces admit regular conditional probabilities \citep{Kallenberg2021}. Apply iterated expectation, conditional exchangeability, positivity and consistency in that order. Equality for every Borel $B$ determines the probability measure on $\calS$.
\end{proof}

By Theorem~\ref{thm:factorization}, the interventional law of any Borel invariant target is the pushforward of~\eqref{eq:gformula} through $\bar h$. This is the ordinary adjustment formula \citep{Robins1986,RosenbaumRubin1983} applied to a quotient-valued outcome, and it remains distinct from the sufficiency question considered next.

\section{When quotient reduction is statistically lossless}\label{sec:lossless}

Let $\Theta$ index a family of observed-data laws. We use the standard comparison-of-experiments framework \citep{LeCam1986,Torgersen1991}. Put
\[
O_{\full}=(C,A,X),\qquad O_{\quot}=T(O_{\full})=(C,A,M(X)),
\]
and write $\calE_{\full}=\{P_\theta^{\full}:\theta\in\Theta\}$ and $\calE_{\quot}=\{P_\theta^{\quot}:\theta\in\Theta\}$. Since $T$ is deterministic, the full experiment Blackwell-dominates the quotient experiment. The reverse comparison requires an additional condition.

\begin{definition}[Quotient-faithful kernel]
	A Markov kernel \(K\) from the quotient sample space to the full sample space
	is \emph{quotient-faithful} for \(\calE_{\quot}\) if
	\[
	K\!\left(s,T^{-1}(\{s\})\right)=1
	\qquad
	P_\theta^{\quot}\text{-almost surely for every }\theta\in\Theta.
	\]
	Thus, up to quotient-null sets under each member of the experiment, the kernel
	reconstructs a raw representative without changing the supplied covariates,
	treatment, or quotient value.
\end{definition}

\begin{theorem}[Boundary of lossless quotient reduction]\label{thm:boundary}
Assume the full and quotient sample spaces are standard Borel. The following are equivalent:
\begin{enumerate}
\item there is a Markov kernel $K$, independent of $\theta$, such that,
for every $\theta\in\Theta$, $K$ is a regular conditional
distribution of $O_{\full}$ given $O_{\quot}$ under
$P_\theta^{\full}$; equivalently,
\[
K(s,\cdot)
=
\Law_\theta(O_{\full}\mid O_{\quot}=s)
\quad
P_\theta^{\quot}\text{-almost surely}.
\]
\item there is a parameter-free quotient-faithful kernel $K$ such that
\begin{equation}\label{eq:reverse-kernel}
P_\theta^{\quot}K=P_\theta^{\full}\qquad\text{for every }\theta\in\Theta.
\end{equation}
\end{enumerate}
When these conditions hold, $O_{\quot}$ is sufficient for $\calE_{\full}$ and $\calE_{\full}\equiv_B\calE_{\quot}$. Hence every bounded decision problem has the same attainable risk set under the two experiments. If the family is dominated, the conditions are also equivalent to Fisher--Neyman factorization through $O_{\quot}$.
\end{theorem}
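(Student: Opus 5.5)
The plan is to prove (1)$\Rightarrow$(2) and (2)$\Rightarrow$(1) directly, and then read off sufficiency, Blackwell equivalence and the dominated-case factorization from (1).

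\emph{From (1) to (2).} Let $K$ be the common regular conditional distribution. Disintegration gives, for every Borel set $F$ of the full space,
\[
P_\theta^{\full}(F)=\int K(s,F)\,dP_\theta^{\quot}(s),
\]
which is \eqref{eq:reverse-kernel}. For faithfulness, I will use the fact that $T$ is a Borel map between standard Borel spaces, so its graph is Borel. The regular conditional distribution of $O_{\full}$ given $T(O_{\full})=s$ is then concentrated on the fiber $T^{-1}(\{s\})$ for $P_\theta^{\quot}$-almost every $s$. To justify this, apply the defining property to the Borel set $\{(o,s):T(o)=s\}$: the joint law of $(O_{\full},O_{\quot})$ puts mass one on it, so $K(s,T^{-1}\{s\})=1$ almost surely. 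Singletons are Borel in a standard Borel space, so the fibers are measurable. This step is routine; the only care needed is measurability of $s\mapsto K(s,T^{-1}\{s\})$, which follows from the Borel graph.

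\emph{From (2) to (1).} This is the step I expect to be the main obstacle, because \eqref{eq:reverse-kernel} alone only matches marginals. Faithfulness is what turns it into a conditional statement. Take Borel sets $F$ in the full space and $B$ in the quotient space. The computation is
\[
P_\theta^{\full}(F\cap T^{-1}B)=\int K(s,F\cap T^{-1}B)\,dP_\theta^{\quot}(s)=\int_B K(s,F)\,dP_\theta^{\quot}(s).
\]
The first equality uses \eqref{eq:reverse-kernel}. For the second, faithfulness gives $K(s,\cdot)$ concentrated on $T^{-1}\{s\}$, and on that fiber $1_{T^{-1}B}=1_B(s)$, for $P_\theta^{\quot}$-almost every $s$. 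Since $P_\theta^{\full}(F\cap T^{-1}B)=\E_\theta[1_F(O_{\full})1_B(O_{\quot})]$, this identity is exactly the defining property of a version of $\Law_\theta(O_{\full}\mid O_{\quot})$. Standard Borel structure ensures the regular conditional distribution exists and is a.s. unique, so $K$ is a common version for every $\theta$.

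\emph{Consequences.} A parameter-free version of the conditional law of the data given $O_{\quot}$ is the definition of sufficiency of $O_{\quot}$ for $\calE_{\full}$.

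For Blackwell equivalence, the deterministic map $T$ gives $\calE_{\full}\succeq\calE_{\quot}$. Conversely, \eqref{eq:reverse-kernel} exhibits $\calE_{\full}$ as a randomization of $\calE_{\quot}$ by the $\theta$-free kernel $K$, so $\calE_{\quot}\succeq\calE_{\full}$. Hence $\calE_{\full}\equiv_B\calE_{\quot}$.

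Equal risk sets follow by composing decision rules with $K$ or with $T$. For a bounded loss $L$ and a rule $\delta$ on the full space, the rule $\delta\circ K$ on the quotient space has the same risk $\int\!\!\int L(\theta,\delta(o))\,K(s,do)\,dP_\theta^{\quot}(s)$ as $\delta$. The reverse inclusion of risk sets is obtained by composing with $T$.

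In the dominated case I will invoke the Halmos--Savage theorem: sufficiency of $O_{\quot}$ is equivalent to densities of the form $g_\theta(T(o))h(o)$ with respect to a dominating measure. This closes the equivalence.
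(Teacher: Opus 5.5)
Your proposal is correct and follows essentially the same route as the paper. The paper proves (1)$\Rightarrow$(2) by disintegration plus concentration of the conditional law on the fibres $T^{-1}(\{s\})$. It proves (2)$\Rightarrow$(1) from the identity $K(s,F\cap T^{-1}B)=\mathbf 1_B(s)K(s,F)$, which follows from faithfulness, and it gets Blackwell equivalence from the two garblings $T$ and $K$. Your graph-of-$T$ argument, which shows that a regular conditional law of $O_{\full}$ given $T(O_{\full})$ sits on the fibres and that $s\mapsto K(s,T^{-1}\{s\})$ is measurable, usefully justifies a step the paper only asserts.
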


\begin{proof}
	Write \(Q_\theta:=P_\theta^{\quot}\). Suppose first that (1) holds. By the
	disintegration identity for the common regular conditional distribution, for
	every measurable set \(B\) in the full sample space,
	\[
	P_\theta^{\full}(B)
	=
	\int K(s,B)\,Q_\theta(ds)
	=
	(Q_\theta K)(B).
	\]
	Hence \(Q_\theta K=P_\theta^{\full}\) for every \(\theta\). Moreover, since
	\(O_{\quot}=T(O_{\full})\) deterministically, a regular conditional law of
	\(O_{\full}\) given \(O_{\quot}=s\) is supported on
	\(T^{-1}(\{s\})\), \(Q_\theta\)-almost surely. Thus \(K\) is
	quotient-faithful, proving (2).
	
	Conversely, suppose that (2) holds. Let \(B\) be measurable in the full sample
	space and let \(D\) be measurable in the quotient sample space. Quotient
	faithfulness implies that, \(Q_\theta\)-almost surely,
	\[
	K\!\left(s,B\cap T^{-1}(D)\right)
	=
	\mathbf 1_D(s)K(s,B).
	\]
	Indeed, conditional on input \(s\), the kernel is supported on
	\(T^{-1}(\{s\})\); this fibre is contained in \(T^{-1}(D)\) when \(s\in D\)
	and is disjoint from \(T^{-1}(D)\) when \(s\notin D\). Using
	\(Q_\theta K=P_\theta^{\full}\) with the measurable set
	\(B\cap T^{-1}(D)\) therefore gives
	\[
	\begin{aligned}
		P_\theta^{\full}\!\left(B\cap T^{-1}(D)\right)
		&=
		\int K\!\left(s,B\cap T^{-1}(D)\right)\,Q_\theta(ds)\\
		&=
		\int_D K(s,B)\,Q_\theta(ds).
	\end{aligned}
	\]
	Since \(T(O_{\full})=O_{\quot}\), this is precisely
	\[
	P_\theta^{\full}\!\left(O_{\full}\in B,\ O_{\quot}\in D\right)
	=
	\int_D K(s,B)\,P_\theta^{\quot}(ds),
	\]
	the defining identity for \(K\) to be a regular conditional distribution of
	\(O_{\full}\) given \(O_{\quot}\). The kernel is common to all
	\(\theta\), so (1) follows.
	
	The deterministic map \(T\) gives
	\(\calE_{\full}\succeq_B\calE_{\quot}\), while \(K\) gives
	\(\calE_{\quot}\succeq_B\calE_{\full}\). Hence the experiments are Blackwell
	equivalent. The decision-theoretic assertion follows from Blackwell
	equivalence, and the dominated factorization equivalence is the standard
	Fisher--Neyman sufficiency theorem.
\end{proof}

The quotient-faithfulness clause is essential for the stated equivalence. An arbitrary reverse kernel may reproduce each full marginal law while scrambling quotient values; such a kernel need not be a conditional distribution given the supplied quotient.

\begin{example}[Maximal invariance is not sufficiency]\label{ex:sign}
Let $G=\{-1,1\}$ act on $\{-1,1\}$ by multiplication. The maximal invariant is constant. Let $Y=1$ and let $\Prob_\theta(\Gamma=1)=\theta$, so $X=\Gamma$ and $\Prob_\theta(X=1)=\theta$. The raw observation identifies $\theta$, whereas the quotient law is identical for every $\theta$. Thus the conditional law of $X$ given the quotient depends on $\theta$ and the quotient is not sufficient. Under unrestricted contamination, within-orbit coordinates may carry information about the nuisance mechanism even though they carry no uniformly observable information about the intrinsic representative.
\end{example}

\subsection{Conditional-Haar contamination as a special case}

Let $G$ be compact and second countable, and let $\mu_G$ denote its
normalized Haar probability measure. Assume that $G$ acts jointly
Borel measurably on $\calY$ and that
$M:\calY\to\calS$ is a Borel maximal invariant. Put
\[
\calS_0=M(\calY),
\]
and assume that $\calS_0$ is a Borel subset of $\calS$. Let
$s:\calS_0\to\calY$ be a Borel section satisfying
\[
M\{s(m)\}=m
\qquad\text{for every }m\in\calS_0.
\]
For $m\in\calS_0$ and Borel $B\subseteq\calY$, define the orbit kernel
\begin{equation}\label{eq:haar-kernel}
	K_m(B)
	=
	\int_G
	\mathbf 1_B\{g\cdot s(m)\}\,d\mu_G(g).
\end{equation}
The kernel may be extended arbitrarily and measurably from $\calS_0$
to $\calS$; its values outside $\calS_0$ are irrelevant because
$M(X)\in\calS_0$ almost surely. Right invariance of Haar measure implies
that $K_m$ does not depend on the particular representative selected
from the orbit indexed by $m$.

\begin{corollary}[Conditional-Haar Blackwell equivalence]\label{cor:haar}
	Suppose that, for every \(\theta\in\Theta\),
	\begin{equation}\label{eq:cond-haar}
		\Law_\theta\{\Gamma\mid C,A,Y(A)\}=\mu_G
	\end{equation}
	almost surely. Then, for \(P_\theta^{\quot}\)-almost every
	\((C,A,m)\),
	\[
	\Law_\theta\{X\mid C,A,M(X)=m\}=K_m
	\]
	has a version that is independent of \(\theta\). Consequently,
	\[
	\calE_{\full}\equiv_B\calE_{\quot}.
	\]
\end{corollary}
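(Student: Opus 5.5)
The plan is to compute the joint law of $(C,A,X)$ explicitly under \eqref{eq:cond-haar}, show that it disintegrates over $(C,A,M(X))$ with kernel $K_m$, and then invoke Theorem~\ref{thm:boundary}, direction (1)$\Rightarrow$ Blackwell equivalence.

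First I would verify that $K_m$ is well defined and Borel in $m$. Well-definedness comes from the remark after \eqref{eq:haar-kernel}. Take another representative $y=h\cdot s(m)$ in the same orbit. The substitution $g\mapsto gh$ and right invariance of $\mu_G$ (true because compact groups are unimodular) give $\int\mathbf 1_B(g\cdot y)\,d\mu_G=K_m(B)$. Measurability of $m\mapsto K_m(B)$ follows from Fubini, since $(g,m)\mapsto \mathbf 1_B\{g\cdot s(m)\}$ is jointly Borel: the action is jointly Borel and $s$ is Borel. Next I would check that $K_m$ is supported on the fibre. For every $g$ we have $M\{g\cdot s(m)\}=M\{s(m)\}=m$, so $K_m(M^{-1}\{m\})=1$. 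Define $K\{(c,a,m),\cdot\}=\delta_{(c,a)}\otimes K_m$. This $K$ is parameter-free and quotient-faithful.

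The main step is the disintegration identity. Fix Borel sets $E\subseteq\calC\times\calA$, $D\subseteq\calS$ and $B\subseteq\calY$, and write $Y=Y(A)$. I would condition on $(C,A,Y)$ and use \eqref{eq:cond-haar}:
\[
P_\theta\{(C,A)\in E,\,M(X)\in D,\,X\in B\}
=\E_\theta\Bigl[\mathbf 1_E(C,A)\mathbf 1_D\{M(Y)\}\int_G\mathbf 1_B(g\cdot Y)\,d\mu_G(g)\Bigr].
\]
This uses $M(X)=M(Y)$ from \eqref{eq:quot-consistency}. By representative independence, the inner integral equals $K_{M(Y)}(B)$, and $M(Y)=M(X)$, so the right side is $\E_\theta[\mathbf 1_E(C,A)\mathbf 1_D\{M(X)\}K_{M(X)}(B)]$. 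Sets of the form $E\times D$ form a $\pi$-system generating the quotient $\sigma$-field. Product sets in the full space determine the joint law of $(O_{\full},O_{\quot})$. A monotone-class argument then shows that $K$ is a regular conditional distribution of $O_{\full}$ given $O_{\quot}$ under every $P_\theta$. This is exactly condition (1) of Theorem~\ref{thm:boundary}, which yields sufficiency and $\calE_{\full}\equiv_B\calE_{\quot}$.

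I expect the main obstacle to be the measurability bookkeeping rather than any conceptual step. Three points need care. First, the conditional-law statement \eqref{eq:cond-haar} must be turned into the Fubini-type identity $\E_\theta[f(\Gamma,C,A,Y)\mid C,A,Y]=\int f(g,C,A,Y)\,d\mu_G(g)$ for jointly Borel $f$, which holds because regular conditional distributions exist on standard Borel spaces. Second, the map $y\mapsto\int\mathbf 1_B(g\cdot y)\,d\mu_G$ must be Borel and must agree with $K_{M(y)}(B)$. Third, the extension of $K$ outside $\calS_0$ must be harmless; it is, because $M(X)\in\calS_0$ surely.
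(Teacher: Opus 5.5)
Your proposal is correct and follows essentially the same route as the paper: condition on $(C,A,Y(A))$, use right invariance of the normalized Haar measure to replace $Y(A)$ by the section representative $s\{M(Y(A))\}$, and then apply Theorem~\ref{thm:boundary} to the parameter-free, quotient-faithful kernel $\delta_{(c,a)}\otimes K_m$. Your rectangle-plus-monotone-class verification is slightly more explicit than the paper's tower-property step. That step treats only the $X$-coordinate and leaves the $(C,A)$ component of the regular conditional distribution implicit.
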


\begin{proof}
	Fix a Borel set \(B\subseteq\calY\). The joint measurability of the action and
	the measurability of the section ensure that \(m\mapsto K_m(B)\) is Borel.
	Condition first on \((C,A,Y(A))\). Under~\eqref{eq:cond-haar},
	\[
	\begin{aligned}
		\Prob_\theta\{X\in B\mid C,A,Y(A)\}
		&=
		\int_G
		\mathbf 1_B\!\left(g\cdot Y(A)\right)\,d\mu_G(g).
	\end{aligned}
	\]
	Let \(y\in\calY\) and put \(m=M(y)\). Since \(s(m)\) and \(y\) lie in the same
	orbit, maximality of \(M\) yields an \(h\in G\) such that
	\[
	y=h\cdot s(m).
	\]
	Consequently,
	\[
	\int_G\mathbf 1_B(g\cdot y)\,d\mu_G(g)
	=
	\int_G\mathbf 1_B\!\left((gh)\cdot s(m)\right)\,d\mu_G(g).
	\]
	By right invariance of normalized Haar probability, \(gh\) has law \(\mu_G\)
	whenever \(g\) has law \(\mu_G\). Hence
	\[
	\int_G\mathbf 1_B(g\cdot y)\,d\mu_G(g)
	=
	\int_G\mathbf 1_B\!\left(g\cdot s(m)\right)\,d\mu_G(g)
	=
	K_m(B).
	\]
	It follows that
	\[
	\Prob_\theta\{X\in B\mid C,A,Y(A)\}
	=
	K_{M\{Y(A)\}}(B)
	\qquad\text{almost surely}.
	\]
	Taking conditional expectations given
	\(\sigma\{C,A,M(Y(A))\}\) therefore gives
	\[
	\begin{aligned}
		\Prob_\theta\{X\in B\mid C,A,M(Y(A))\}
		&=
		\mathbb E_\theta\!\left[
		K_{M\{Y(A)\}}(B)
		\mid C,A,M(Y(A))
		\right]\\
		&=
		K_{M\{Y(A)\}}(B)
		\qquad\text{almost surely}.
	\end{aligned}
	\]
	Since \(M\) is invariant and \(X=\Gamma\cdot Y(A)\),
	\[
	M(X)=M\{\Gamma\cdot Y(A)\}=M\{Y(A)\}
	\qquad\text{almost surely}.
	\]
	Thus
	\[
	\Law_\theta\{X\mid C,A,M(X)=m\}=K_m
	\]
	for \(P_\theta^{\quot}\)-almost every \((C,A,m)\), and this version is
	independent of \(\theta\).
	
	For completeness, define a kernel from the quotient sample space to the full
	sample space by
	\[
	\widetilde K\bigl((c,a,m),B\bigr)
	:=
	\int_{\calY}
	\mathbf 1_B(c,a,x)\,K_m(dx),
	\]
	for every measurable \(B\) in the full sample space. This kernel is
	parameter-free. Moreover, \(K_m\) is supported on
	\(M^{-1}(\{m\})\), since
	\[
	M\{g\cdot s(m)\}=M\{s(m)\}=m
	\qquad\text{for every }g\in G.
	\]
	Hence \(\widetilde K\) is quotient-faithful. It is a common regular
	conditional distribution of \(O_{\full}\) given \(O_{\quot}\), so
	Theorem~\ref{thm:boundary} yields
	\[
	\calE_{\full}\equiv_B\calE_{\quot}.
	\]
\end{proof}

The logical hierarchy is therefore:
\[
\begin{array}{ll}
\text{unrestricted }\Gamma: & M\{Y(a)\}\text{ is the maximal uniformly observable target};\\
\text{parameter-free within-orbit law:} & M(X)\text{ is statistically sufficient};\\
\text{conditional Haar }\Gamma: & \text{a concrete compact-group condition implying sufficiency}.
\end{array}
\]
The main results and empirical construction use the first line. The Haar assumption is not needed for observability, quotient identification or exact randomization inference.

\section{Independent product actions and shared diagonal actions}\label{sec:multi}

Structured units often contain labeled components, such as two microscopy sites. The appropriate group depends on whether acquisition transformations are component-specific or shared.

\begin{proposition}[Product versus diagonal actions]\label{prop:product-diagonal}
Let $M:\calY\to\calS$ be a maximal invariant for a $G$-action and let $J\ge2$. Define
\[
M^{\times J}(y_1,\ldots,y_J)=\bigl(M(y_1),\ldots,M(y_J)\bigr).
\]
\begin{enumerate}
\item Under the product action of $G^J$,
\[
(g_1,\ldots,g_J)\cdot(y_1,\ldots,y_J)=(g_1\cdot y_1,\ldots,g_J\cdot y_J),
\]
$M^{\times J}$ is a maximal invariant.
\item Under the diagonal action of $G$,
\[
g\cdot(y_1,\ldots,y_J)=(g\cdot y_1,\ldots,g\cdot y_J),
\]
$M^{\times J}$ is invariant. Moreover, on any diagonally
$G$-invariant model subset $\calD\subseteq\calY^J$, it is maximal
for the restricted diagonal action if and only if, whenever $y,z\in\calD$ and $z_j=g_j\cdot y_j$ for component-specific $g_j$, there is a single $g\in G$ satisfying $z_j=g\cdot y_j$ for every $j$.
\end{enumerate}
Thus componentwise canonicalization can discard relative cross-component information under a shared transformation.
\end{proposition}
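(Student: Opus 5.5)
The plan is to verify both parts directly from the definition of a maximal invariant, applied coordinatewise. Measurability is immediate, since $M^{\times J}$ is a product of Borel maps into the standard Borel space $\calS^J$. So the content is purely the two-sided orbit characterization, and I will prove each direction of the biconditional ``$M^{\times J}(y)=M^{\times J}(z)$ iff $z$ lies in the orbit of $y$'' separately.

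For part (1), invariance is clear: $M(g_j\cdot y_j)=M(y_j)$ for each $j$.

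For maximality, suppose $M^{\times J}(y)=M^{\times J}(z)$. Then $M(y_j)=M(z_j)$ for every $j$. Maximality of $M$ supplies some $g_j\in G$ with $z_j=g_j\cdot y_j$. Choosing such a $g_j$ separately for each $j$ (finitely many choices, so no measurable-selection issue) gives $(g_1,\ldots,g_J)\in G^J$ with $z=(g_1,\ldots,g_J)\cdot y$. Conversely, if $z$ is in the $G^J$-orbit of $y$, equality of $M^{\times J}$ follows from invariance. This completes (1).

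For part (2), invariance under the diagonal action is the special case $g_1=\cdots=g_J=g$ of (1). For the characterization on a diagonally invariant $\calD$, I interpret maximality for the restricted action as follows: for $y,z\in\calD$, $M^{\times J}(y)=M^{\times J}(z)$ holds iff $z=g\cdot y$ for some $g\in G$. Diagonal invariance of $\calD$ guarantees that the restricted action is well defined, so this interpretation makes sense.

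The key observation comes from part (1): for $y,z\in\calD$, the equality $M^{\times J}(y)=M^{\times J}(z)$ is equivalent to the existence of component-specific $g_j$ with $z_j=g_j\cdot y_j$. The ``if'' direction of the maximality biconditional always holds by invariance. So maximality on $\calD$ is equivalent to the following implication: whenever such component-specific $g_j$ exist, a single diagonal $g$ also exists. This is exactly the stated condition, and both directions of the ``if and only if'' follow by this rewriting.

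The concluding remark needs an example rather than a proof. I would take $G=\Z$ acting on $\R$ by translation, or a lattice shift on images, with $J=2$. Take $y=(0,0)$ and $z=(0,1)$. Componentwise $M$ agrees, but no single shift maps $y$ to $z$. So the relative displacement $y_2-y_1$, which is a diagonal invariant, is not a function of $M^{\times J}$.

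No step is a genuine obstacle. The only point requiring care is stating precisely what ``maximal for the restricted action'' means on $\calD$ and noting that diagonal invariance of $\calD$ is what makes the restriction an action.
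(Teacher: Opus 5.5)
Your proposal is correct and follows the paper's proof essentially step by step. Part (1) applies the maximality of $M$ coordinatewise. For part (2), you observe that equality of $M^{\times J}$ on $\calD$ is equivalent to the existence of component-specific $g_j$, so maximality reduces exactly to the stated single-$g$ condition.

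Your illustrative example differs from the paper's but works equally well. With $\Z$ acting on $\R$, $M$ is the fractional part, and $y=(0,0)$, $z=(0,1)$ have the same componentwise quotients but lie in different diagonal orbits. The paper instead uses $(\R,+)$, where $M$ is constant.
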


\begin{proof}
For (1), suppose first that
\[
M^{\times J}(y_1,\ldots,y_J)
=
M^{\times J}(z_1,\ldots,z_J).
\]
Then $M(y_j)=M(z_j)$ for every $j$. By maximality of $M$, for each
$j$ there exists $g_j\in G$ such that $z_j=g_j\cdot y_j$. Therefore
\[
(z_1,\ldots,z_J)
=
(g_1,\ldots,g_J)\cdot(y_1,\ldots,y_J),
\]
so the two points belong to the same $G^J$-orbit. The converse follows
immediately from the invariance of $M$ in each component. Hence
$M^{\times J}$ is maximal for the product action.

For (2), invariance under the diagonal action follows because
\[
M(g\cdot y_j)=M(y_j)
\qquad\text{for every }j.
\]
Now let $y,z\in\calD$. Equality
$M^{\times J}(y)=M^{\times J}(z)$ is equivalent, by maximality of
$M$, to the existence of possibly different elements
$g_1,\ldots,g_J\in G$ satisfying
\[
z_j=g_j\cdot y_j,\qquad j=1,\ldots,J.
\]
The points $y$ and $z$ belong to the same diagonal orbit precisely
when these component-specific transformations can be replaced by a
single $g\in G$ satisfying $z_j=g\cdot y_j$ for every $j$. This is
exactly the stated condition.
\end{proof}

\begin{example}[Relative displacement]
Let $G=(\R,+)$ act on $\R$ by translation. A one-component maximal invariant is constant because the action is transitive. For a pair $(y_1,y_2)$ under the diagonal action, the difference $y_2-y_1$ is invariant and separates diagonal orbits. Componentwise quotienting is constant and therefore destroys this relative displacement. The image analogue is relative position or orientation between sites subjected to one common acquisition motion. In contrast, when the two sites undergo genuinely separate motions, relative pose is not uniformly observable and the product quotient is appropriate.
\end{example}

For the RxRx1 stress test below, the imposed transformation generator assigns separate site-specific motions, so the product action is the declared model. Proposition~\ref{prop:product-diagonal}(1) prevents that application-specific choice from being misread as a general prescription.

\section{Stability under approximate contamination}\label{sec:approx}

Exact invariance is action-specific. To obtain a correct perturbation result, additional metric regularity is needed. Let $(\calY,d)$ be a metric space on which $G$ acts by isometries and define the orbit pseudometric
\begin{equation}\label{eq:orbitmetric}
d_G(y,z)=\inf_{g\in G}d(y,g\cdot z).
\end{equation}
Let $(\calS,d_{\calS})$ contain a quotient representation $M$ satisfying
\begin{equation}\label{eq:lipschitzM}
d_{\calS}\{M(y),M(z)\}\le L_M d_G(y,z).
\end{equation}
Let $\mathcal P_1(\calS)$ denote the Borel probability measures on
$\calS$ having a finite first moment with respect to $d_{\calS}$.
For a measurable positive-semidefinite kernel $k$ with reproducing
kernel Hilbert space $\calH_k$ and Borel feature map
$\Phi:\calS\to\calH_k$, define
\[
\mu_k(P)=\int_{\calS}\Phi(s)\,P(ds)
\]
whenever this Bochner integral exists, and put
\[
\MMD_k(P,Q)
=
\|\mu_k(P)-\mu_k(Q)\|_{\calH_k}.
\]
\begin{theorem}[Quotient-law and MMD stability]\label{thm:stability}
For arms $a\in\{0,1\}$, let $(X_a,Y_a)$ be a coupling of the approximately contaminated and intrinsic outcomes, and assume
\[
\E d_G(X_a,Y_a)\le\varepsilon_a.
\]
Write
\[
\widetilde P_a=\Law\{M(X_a)\},
\qquad
P_a=\Law\{M(Y_a)\},
\]
and assume that $\widetilde P_a,P_a\in\mathcal P_1(\calS)$. Then
\begin{equation}\label{eq:w1bound}
W_{1,d_{\calS}}(\widetilde P_a,P_a)\le L_M\varepsilon_a.
\end{equation}
If the mean embeddings of $\widetilde P_a$ and $P_a$ exist and the
feature map satisfies
\begin{equation}\label{eq:featurelip}
\|\Phi(s)-\Phi(t)\|_{\calH_k}\le L_k d_{\calS}(s,t),
\end{equation}
then
\begin{equation}\label{eq:mmdbound}
\left|\MMD_k(\widetilde P_1,\widetilde P_0)-\MMD_k(P_1,P_0)\right|
\le L_kL_M(\varepsilon_1+\varepsilon_0).
\end{equation}
If the perturbation bounds hold almost surely, the same conclusions follow with those uniform bounds.
\end{theorem}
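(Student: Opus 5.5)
The plan is to push the given coupling through $M$ and use it as a transport plan; the Lipschitz condition~\eqref{eq:lipschitzM} then converts orbit-distance control into quotient-law control. Fix an arm $a$. The pair $(M(X_a),M(Y_a))$ is a coupling of $\widetilde P_a$ and $P_a$, since both coordinates are Borel images of the jointly defined $(X_a,Y_a)$. The Wasserstein distance is an infimum over couplings, and both marginals lie in $\mathcal P_1(\calS)$, so
\[
W_{1,d_{\calS}}(\widetilde P_a,P_a)\le \E\, d_{\calS}\{M(X_a),M(Y_a)\}\le L_M\,\E\, d_G(X_a,Y_a)\le L_M\varepsilon_a ,
\]
which is~\eqref{eq:w1bound}. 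The only technical point is measurability of $\omega\mapsto d_G(X_a,Y_a)$, because $d_G$ is an infimum over a possibly uncountable group. I would avoid the issue by reading the hypothesis $\E d_G(X_a,Y_a)\le\varepsilon_a$ as already asserting measurability, or else as an outer expectation. The middle inequality then only needs the measurable function $d_{\calS}\{M(X_a),M(Y_a)\}$ to be pointwise dominated by $L_M d_G(X_a,Y_a)$, and monotonicity of the outer integral handles this.

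For the MMD bound I would work directly with the same coupling rather than through $W_1$ duality, because the feature map is Hilbert-valued. Since the mean embeddings exist as Bochner integrals, linearity and the Bochner norm inequality $\|\E Z\|\le\E\|Z\|$ give
\[
\|\mu_k(\widetilde P_a)-\mu_k(P_a)\|_{\calH_k}=\bigl\|\E[\Phi(M(X_a))-\Phi(M(Y_a))]\bigr\|_{\calH_k}\le L_k\,\E\, d_{\calS}\{M(X_a),M(Y_a)\}\le L_kL_M\varepsilon_a .
\]
Justifying this step requires the difference $\Phi(M(X_a))-\Phi(M(Y_a))$ to be Bochner integrable. Each term is integrable because its mean embedding exists, so the difference is integrable as well. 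The reverse triangle inequality in $\calH_k$ then yields
\[
\bigl|\MMD_k(\widetilde P_1,\widetilde P_0)-\MMD_k(P_1,P_0)\bigr|\le \|\mu_k(\widetilde P_1)-\mu_k(P_1)\|+\|\mu_k(\widetilde P_0)-\mu_k(P_0)\|,
\]
and adding the two arm-wise bounds gives~\eqref{eq:mmdbound}.

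For the almost-sure version, $d_G(X_a,Y_a)\le\varepsilon_a$ a.s.\ implies the expectation bound, so both conclusions follow from the preceding two steps. None of the steps is deep. The main point needing care is the measurability and integrability bookkeeping: measurability of $d_G$ in the first step, and Bochner integrability of the embedding difference in the second. Note also that isometry of the action is not used directly here; it only makes $d_G$ a well-behaved pseudometric, and~\eqref{eq:lipschitzM} absorbs it.
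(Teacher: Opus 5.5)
Your proposal is correct and follows the paper's proof essentially step for step. You push the given coupling through $M$ and use \eqref{eq:lipschitzM} to bound $W_{1,d_{\calS}}$; you reuse the same coupling with the Bochner norm inequality and \eqref{eq:featurelip} to bound each arm's mean-embedding shift by $L_kL_M\varepsilon_a$; and you finish with the reverse triangle inequality, exactly as the paper does, while also making explicit the measurability of $d_G(X_a,Y_a)$ and the integrability of the embedding difference, which the paper leaves implicit.
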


\begin{proof}
The supplied coupling and~\eqref{eq:lipschitzM} give
\[
W_{1,d_{\calS}}(\widetilde P_a,P_a)
\le\E d_{\calS}\{M(X_a),M(Y_a)\}
\le L_M\varepsilon_a.
\]
Let $\mu(R)=\E_{S\sim R}\Phi(S)$. By the same coupling and~\eqref{eq:featurelip},
\[
\|\mu(\widetilde P_a)-\mu(P_a)\|_{\calH_k}\le L_kL_M\varepsilon_a.
\]
Since $\MMD_k(R,S)=\|\mu(R)-\mu(S)\|_{\calH_k}$, the reverse triangle inequality followed by the last two bounds proves~\eqref{eq:mmdbound}.
\end{proof}

For a Gaussian kernel $k(s,t)=\exp\{-\|s-t\|^2/(2\sigma^2)\}$ on a Hilbert quotient embedding,
\[
\|\Phi(s)-\Phi(t)\|^2=2\{1-k(s,t)\}\le\|s-t\|^2/\sigma^2,
\]
so $L_k=1/\sigma$. The theorem is intentionally conditional: the lexicographic lattice canonicalizer in Section~\ref{sec:lattice} need not be globally Lipschitz near support changes or canonicalization ties. The approximate-action experiment in Table~\ref{tab:approx} therefore remains an empirical sensitivity analysis; it is not retroactively certified by Theorem~\ref{thm:stability} unless~\eqref{eq:lipschitzM} is verified on the relevant image class.

\section{A maximal invariant for lattice microscopy images}\label{sec:lattice}

\subsection{Lattice rigid motions}

Fix $L<\infty$. Let $\calX_{q,L}$ contain $q$-channel, 8-bit functions on $\Z^2$ with finite support whose nonempty bounding box has at most $L$ rows and columns, together with the zero image. Let $C_4=\{0,1,2,3\}$, with addition understood modulo four, and let
$R_r$ denote the counterclockwise rotation of $\Z^2$ through
$r\pi/2$. The action of $C_4$ on $\Z^2$ is $u\mapsto R_ru$. Thus the
semidirect product
\[
G_{\rm rig}=\Z^2\rtimes C_4
\]
has multiplication and inverse
\[
(t,r)(u,s)=\bigl(t+R_ru,r+s\bigr),
\qquad
(t,r)^{-1}=\bigl(-R_r^{-1}t,-r\bigr).
\]
Its action on an image $x$ is
\begin{equation}\label{eq:rigaction}
	\{(t,r)\cdot x\}(v)
	=
	x\{R_r^{-1}(v-t)\},
	\qquad v\in\Z^2.
\end{equation}
Here $t,u,v\in\Z^2$ are lattice coordinates or translations, whereas
$r,s\in C_4$ index quarter turns.

For nonzero $x$, let $b(x)$ be the lower corner of its support bounding box and set $N(x)=(-b(x),0)\cdot x$. For each $r\in C_4$, form $x_r=N\{(0,r)\cdot x\}$. The key $\kappa(x_r)$ is the finite vector containing the
bounding-box height and width, followed by all channel bytes in a
fixed site--channel--row--column order. These finite vectors are
compared using the lexicographic order. Define
\begin{equation}\label{eq:canonical}
c(x)=x_{r^\star},\qquad r^\star=\min\argmin_{r\in C_4}\kappa(x_r),
\end{equation}
and set $c(0)=0$.

\begin{theorem}[Lattice rigid-motion canonicalization]\label{thm:lattice}
The map $c$ is Borel and is a maximal invariant for $\Z^2\rtimes C_4$.
\end{theorem}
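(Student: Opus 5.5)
We prove three things in turn: $c$ is invariant, $c$ separates orbits, and $c$ is Borel. Maximality means $c(x)=c(z)$ if and only if $z=g\cdot x$ for some $g\in G_{\rm rig}$.

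\emph{Separation.} First, $c(x)$ lies in the orbit of $x$. We have $x_r=(-b((0,r)\cdot x),0)(0,r)\cdot x$, and $c(x)=x_{r^\star}$ is one of these, so $c(x)=g_x\cdot x$ for an explicit $g_x\in G_{\rm rig}$. Hence if $c(x)=c(z)$, then $z=g_z^{-1}g_x\cdot x$. The zero image is fixed by the action and is its own orbit, so we set it aside throughout.

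\emph{Invariance.} This is the substantive step. Fix nonzero $x$ and $g=(t,s)$, and compare the candidate sets $\{x_r\}_{r\in C_4}$ and $\{(g\cdot x)_r\}_{r\in C_4}$.

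\begin{enumerate}
\item Translation removal. $N$ is constant on $\Z^2$-orbits, because $b((u,0)\cdot y)=b(y)+u$; hence $N((u,0)\cdot y)=N(y)$.
\item Rewriting the candidates. From the multiplication law, $(0,r)(t,s)=(R_rt,r+s)=(R_rt,0)(0,r+s)$. Therefore
\[
(g\cdot x)_r=N\{(R_rt,0)\cdot(0,r+s)\cdot x\}=N\{(0,r+s)\cdot x\}=x_{r+s}.
\]
\item Conclusion. Since $r\mapsto r+s$ is a bijection of $C_4$ (mod 4), the two candidate multisets coincide, with indices shifted by $s$. The minimal key value $\min_r\kappa(\cdot)$ is therefore the same for $x$ and $g\cdot x$.
\end{enumerate}

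\emph{The main obstacle: ties.} If several $r$ attain the minimum key, the tie-break $\min\argmin$ picks different indices for $x$ and $g\cdot x$. We must show the selected image does not depend on which minimizer is chosen. The key $\kappa(y)$ consists of the bounding-box dimensions followed by all bytes in a fixed order over a box anchored at the origin. For translation-normalized images, whose lower corner is at the origin, $\kappa$ is therefore injective: equal keys force equal boxes and equal values, hence equal images. So all minimizers give the same $x_r$, and $c(g\cdot x)=c(x)$. This injectivity needs a little care. The key must be read as the finite array over the box, and the box dimensions must be included so that different shapes cannot produce the same flattened byte sequence. Also, the "site" index in the ordering is only a fixed labeling. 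For multi-site images the same argument applies with $c$ applied componentwise and Proposition~\ref{prop:product-diagonal}(1).

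\emph{Borel measurability.} $\calX_{q,L}$ is countable: there are finitely many byte values on a finite box, and countably many translates. We equip it with the discrete (equivalently, cylinder) $\sigma$-algebra, which is standard Borel. On a countable discrete space every map is Borel. If one prefers the product topology in $\{0,\dots,255\}^{q\times\Z^2}$, note that each of $b$, $N$, the rotations and $\kappa$ is determined by finitely many coordinates on each of countably many Borel pieces, namely the pieces indexed by the bounding box. Hence $c$ is Borel there as well.
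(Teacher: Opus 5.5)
Your proof is correct and follows essentially the same route as the paper: Borel measurability from the countable, finitely determined structure of $\calX_{q,L}$; invariance via $(0,r)(t,s)=(R_rt,r+s)$ combined with translation normalization; and maximality because $c(x)$ lies in the orbit of $x$. Your explicit treatment of ties makes precise a step the paper passes over when it asserts that the lexicographically smallest candidates coincide: $\kappa$ is injective on translation-normalized images, so all minimizers give the same candidate.
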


\begin{proof}
	Because $\calX_{q,L}$ is a countable union of finite-dimensional
	finite-valued image spaces, the support map, bounding-box map,
	translation normalization, finite serialization and lexicographic
	comparison operations are Borel. Since the minimization in
	\eqref{eq:canonical} is over the finite set $C_4$, the selected index
	$r^\star$ and hence $c$ are Borel.
	
	We next prove invariance. Translation normalization removes every
	integer translation:
	\[
	N\{(t,0)\cdot z\}=N(z)
	\qquad
	\text{for every }t\in\Z^2.
	\]
	Let $y=(t,s)\cdot x$. For any $r\in C_4$, the semidirect-product law
	gives
	\[
	(0,r)(t,s)=(R_rt,r+s).
	\]
	Therefore
	\[
	\begin{aligned}
		N\{(0,r)\cdot y\}
		&=
		N\{(0,r)(t,s)\cdot x\}\\
		&=
		N\{(R_rt,r+s)\cdot x\}\\
		&=
		N\{(0,r+s)\cdot x\}.
	\end{aligned}
	\]
	Thus the four normalized rotational candidates of $y$ are exactly
	the four normalized rotational candidates of $x$, with their indices
	permuted by $r\mapsto r+s$. Their lexicographically smallest elements
	are consequently equal, and hence
	\[
	c(y)=c(x).
	\]
	
	It remains to prove maximality. Suppose that $c(x)=c(y)=z$. By the
	construction of $c$, there exist translations $u,v\in\Z^2$ and
	rotations $r,s\in C_4$ such that
	\[
	z=(u,r)\cdot x
	\qquad\text{and}\qquad
	z=(v,s)\cdot y.
	\]
	Applying $(v,s)^{-1}$ to the second equality and substituting the
	first gives
	\[
	y
	=
	(v,s)^{-1}\cdot z
	=
	(v,s)^{-1}(u,r)\cdot x.
	\]
	Because $(v,s)^{-1}(u,r)\in G_{\rm rig}$, the images $x$ and $y$
	belong to the same $G_{\rm rig}$-orbit. Together with invariance, this
	shows
	\[
	c(x)=c(y)
	\quad\Longleftrightarrow\quad
	y=g\cdot x\text{ for some }g\in G_{\rm rig},
	\]
	so $c$ is a maximal invariant.
\end{proof}

An RxRx1 well consists of two labeled sites. Let
\[
\calW=\calX_{q,L}^2
\]
be the raw well space, with a typical well written as
$w=(x_1,x_2)$. Under the empirically imposed separate site-specific
motions, the well-level nuisance group is
\[
G_{\rm well}=G_{\rm rig}\times G_{\rm rig}.
\]
Define the well-level canonicalization map
\begin{equation}\label{eq:wellcanonical}
	M_{\rm well}(x_1,x_2)
	=
	\bigl(c(x_1),c(x_2)\bigr),
\end{equation}
and put
\[
\calW_{\rm can}=M_{\rm well}(\calW).
\]
Theorem~\ref{thm:lattice} and
Proposition~\ref{prop:product-diagonal}(1) imply that
$M_{\rm well}$ is a maximal invariant for the product action. No
probability measure is introduced on the noncompact translation
group.

\subsection{Characteristic quotient kernel and Euler signature}
Let
\[
\psi:\calW_{\rm can}\longrightarrow\mathbb R^{2qL^2}
\]
place the two canonical crops on fixed $L\times L$ zero canvases and
concatenate their entries in a fixed site--channel--row--column order.
This map is a Borel injection.
 For $\sigma>0$, define
\begin{equation}\label{eq:kernel}
	k_\sigma(w,w')
	=
	\exp\left[
	-\frac{
		\|\psi\{M_{\rm well}(w)\}
		-\psi\{M_{\rm well}(w')\}\|_2^2
	}{2\sigma^2}
	\right].
\end{equation}

\begin{proposition}\label{prop:kernel}
	The kernel~\eqref{eq:kernel} is positive semidefinite and invariant
	in each argument under the action of $G_{\rm well}$. Its kernel mean
	embedding distinguishes all Borel probability laws on
	$\calW_{\rm can}$.
\end{proposition}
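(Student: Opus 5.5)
The kernel in \eqref{eq:kernel} is a composition $k_\sigma(w,w')=g_\sigma\{\phi(w),\phi(w')\}$, where $\phi=\psi\circ M_{\rm well}$ and $g_\sigma(u,v)=\exp\{-\|u-v\|_2^2/(2\sigma^2)\}$ is the Gaussian kernel on $\R^{2qL^2}$. The proof treats each of the three claims through this factorization.

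\emph{Positive semidefiniteness.} The Gaussian kernel is positive definite on $\R^{2qL^2}$; this follows from Bochner's theorem, since it is the Fourier transform of a Gaussian density. For any points $w_1,\ldots,w_n$ and coefficients $\alpha_i$, the quadratic form satisfies
\[
\sum_{i,j}\alpha_i\alpha_j k_\sigma(w_i,w_j)=\sum_{i,j}\alpha_i\alpha_j g_\sigma\{\phi(w_i),\phi(w_j)\}\ge0 .
\]
Pulling a kernel back through any map preserves positive semidefiniteness. Measurability of $k_\sigma$ follows because $M_{\rm well}$ is Borel by Theorem~\ref{thm:lattice} and $\psi$ is Borel.

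\emph{Invariance.} By Theorem~\ref{thm:lattice} and Proposition~\ref{prop:product-diagonal}(1), $M_{\rm well}$ is invariant under $G_{\rm well}$. Hence $\phi(h\cdot w)=\phi(w)$ for every $h\in G_{\rm well}$, which gives $k_\sigma(h\cdot w,w')=k_\sigma(w,w')$ and, by symmetry, invariance in the second argument.

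\emph{Characteristic property.} This is the main step. Interpret laws on $\calW_{\rm can}$ as laws of canonical representatives; on $\calW_{\rm can}$ the map $M_{\rm well}$ acts as the identity, because canonicalizing a canonical crop returns it. The kernel restricted to $\calW_{\rm can}$ is therefore $g_\sigma\{\psi(s),\psi(s')\}$. Its feature map is bounded, so mean embeddings exist for every Borel probability law.

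Let $P,Q$ be Borel laws on $\calW_{\rm can}$ with equal embeddings. The pushforwards $\psi_\#P$ and $\psi_\#Q$ are Borel laws on $\R^{2qL^2}$ whose Gaussian-kernel embeddings coincide. The Gaussian kernel is characteristic on Euclidean space: its spectral measure has full support, so equal embeddings force equal characteristic functions. Hence $\psi_\#P=\psi_\#Q$.

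It remains to pull this equality back to $P=Q$, and this is where the obstacle lies. It requires that $\psi$ be a Borel injection and that its image sets determine the Borel sets of $\calW_{\rm can}$. I will argue as follows.

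\begin{itemize}
\item First, I check that $\calW_{\rm can}$ is a standard Borel space. It is countable: each $\calX_{q,L}$ consists of finitely supported 8-bit images with bounded bounding box, and after translation normalization only finitely many such crops exist. Hence every subset of $\calW_{\rm can}$ is Borel.
\item Second, I check injectivity. A canonical crop sits at the origin with a recorded height and width. The zero-padding $\psi$ loses the box dimensions, so injectivity needs care. The normalized crop has nonzero entries in its first row and first column, but its last row or column might still look like padding. The bounding box, however, is the support box, so the trailing rows and columns are nonzero exactly at the box extent. Therefore the box dimensions can be recovered from the support of $\psi(s)$.
\end{itemize}

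With $\psi$ injective on a countable domain, each singleton $\{s\}$ satisfies $\{s\}=\psi^{-1}(\{\psi(s)\})$. Consequently $P(\{s\})=\psi_\#P(\{\psi(s)\})=\psi_\#Q(\{\psi(s)\})=Q(\{s\})$ for every $s$, and therefore $P=Q$. In the general case one would instead invoke the Lusin--Souslin theorem, which shows that a Borel injection between standard Borel spaces maps Borel sets to Borel sets.
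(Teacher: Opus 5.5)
Your proof is correct. It matches the paper's argument for positive semidefiniteness (pullback of the Gaussian kernel through $\psi\circ M_{\rm well}$), for invariance, and for the step $\psi_\#P=\psi_\#Q$. The two diverge only when pulling this back to $P=Q$.

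The paper invokes the Lusin--Souslin theorem: a Borel injection between standard Borel spaces has Borel image and Borel inverse on that image. You instead use that $\calW_{\rm can}$ is finite, since canonical crops are 8-bit arrays vanishing off $\{0,\ldots,L-1\}^2$. A law on it is therefore determined by the masses $P(\{s\})=\psi_\#P(\{\psi(s)\})$. Your route is more elementary and fully sufficient for the lattice space as defined. The paper's route, which you mention at the end, is the one that survives dropping the 8-bit restriction: the canonical space then becomes uncountable and singleton masses no longer determine a law.

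You also make explicit two facts the paper uses without comment. The first is injectivity of $\psi$, which the paper only asserts. The second is $M_{\rm well}=\mathrm{id}$ on $\calW_{\rm can}$, which is needed to read the kernel on $\calW_{\rm can}$ as the Gaussian kernel pulled back through $\psi$. That identity deserves its one-line justification: $c(x)$ lies in the orbit of $x$, so invariance gives $c\{c(x)\}=c(x)$.

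On injectivity, the worry about lost box dimensions is moot. A canonical crop is a function on $\Z^2$ supported in $\{0,\ldots,L-1\}^2$, and $\psi$ records all of its values there. Your passing remark that the last row or column ``might still look like padding'' is false by the definition of the support bounding box, as your next sentence in effect acknowledges.
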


\begin{proof}
	Let
	\[
	k_{\rm G}(u,v)
	=
	\exp\left\{-\frac{\|u-v\|_2^2}{2\sigma^2}\right\},
	\qquad
	u,v\in\mathbb R^{2qL^2}.
	\]
	The kernel $k_{\rm G}$ is positive semidefinite and characteristic on
	Euclidean space \citep{Sriperumbudur2010}. Equation~\eqref{eq:kernel}
	is its pullback through the Borel map
	$\psi\circ M_{\rm well}$, and is therefore positive semidefinite.
	
	For every $g\in G_{\rm well}$,
	\[
	M_{\rm well}(g\cdot w)=M_{\rm well}(w).
	\]
	Consequently,
	\[
	k_\sigma(g\cdot w,w')=k_\sigma(w,w')
	\quad\text{and}\quad
	k_\sigma(w,g\cdot w')=k_\sigma(w,w'),
	\]
	which proves invariance in each argument.
	
	Finally, let $P$ and $Q$ be Borel probability measures on
	$\calW_{\rm can}$. Equality of their kernel mean embeddings implies,
	by the characteristic property of the Gaussian kernel,
	\[
	\psi_{\#}P=\psi_{\#}Q.
	\]
	Because $\psi$ is a Borel injection between standard Borel spaces,
	the Lusin--Souslin theorem implies that $\psi(\calW_{\rm can})$ is
	Borel and that $\psi^{-1}$ is Borel on this image. Applying
	$\psi^{-1}$ to the two equal pushforward laws therefore yields
	$P=Q$.
\end{proof}

The secondary endpoint is a finite directional Euler vector. After canonicalization, Otsu thresholding \citep{Otsu1979} of the nuclear channel and removal of components smaller than four pixels produce a planar cubical complex. Euler characteristic is evaluated along 25 shape-adaptive thresholds in eight lattice directions at both sites. Because the calculation is applied after $M_{\rm well}$, the vector is exactly invariant under $G_{\rm well}$, but it is neither claimed nor used as a maximal invariant.

\section{Exact design-based testing}\label{sec:randomization}

Suppose the confirmation experiment contains $B$ paired blocks. In
block $b$, two eligible wells receive conditions $0$ and $1$, once
each. Order the two wells deterministically before observing their
assigned conditions, and let
\[
q_{b,0},q_{b,1}\in\calW_{\rm can}
\]
denote the quotient outcomes of the first and second wells,
respectively.

Let $Z_b\in\{0,1\}$ be the label of the condition assigned to the
first well. Thus, if $Z_b=1$, the first well receives condition $1$
and the second receives condition $0$; if $Z_b=0$, these assignments
are reversed. Conditional on the unordered eligible wells, their
complete potential outcomes and the pretreatment design information,
assume that
\[
Z=(Z_1,\ldots,Z_B)
\]
is uniform on
\[
\Omega=\{0,1\}^B.
\]

For an assignment $z\in\Omega$, the quotient samples assigned to
conditions $1$ and $0$ are, respectively,
\[
\calQ_1(z)=\{q_{b,1-z_b}:b=1,\ldots,B\},
\qquad
\calQ_0(z)=\{q_{b,z_b}:b=1,\ldots,B\}.
\]
Let $P_1$ and $P_0$ denote the corresponding population
quotient-outcome laws under conditions $1$ and $0$. Their population
kernel discrepancy is
\begin{equation}\label{eq:mmd}
	\MMD_k^2(P_1,P_0)
	=
	\E k(U,U')
	+
	\E k(V,V')
	-
	2\E k(U,V),
\end{equation}
where $U,U'$ are independent with law $P_1$, $V,V'$ are independent
with law $P_0$, and the four variables are mutually independent. The empirical statistic is the conventional two-sample $U$-statistic
\begin{align}\label{eq:ummd}
\widehat{\MMD}_u^2(z)
={}&\frac{1}{B(B-1)}\sum_{b\ne b'}k(q_{b,1-z_b},q_{b',1-z_{b'}})
+\frac{1}{B(B-1)}\sum_{b\ne b'}k(q_{b,z_b},q_{b',z_{b'}})\nonumber\\
&-\frac{2}{B^2}\sum_{b,b'}k(q_{b,z_b},q_{b',1-z_{b'}}).
\end{align}
Matched-block dependence can remove the usual unbiasedness interpretation; here~\eqref{eq:ummd} is a prespecified discrepancy statistic. The kernel two-sample construction follows the standard MMD formulation \citep{Gretton2012}, but the exactness below does not require unbiased estimation of a superpopulation parameter. Let $\calQ_{\rm pool}=\{q_{b,j}:b=1,\ldots,B,\ j\in\{0,1\}\}$.
The Gaussian bandwidth is fixed before examining the condition labels
by setting
\[
\sigma^2
=
\frac12\,
\operatorname{median}
\left\{
\|\psi(q)-\psi(q')\|_2^2:
q,q'\in\calQ_{\rm pool},\
q\ne q',\
\|\psi(q)-\psi(q')\|_2>0
\right\}.
\]
Because $\calQ_{\rm pool}$ does not change under paired label swaps,
the bandwidth is constant over the randomization distribution.

Let $S:\Omega\to\mathbb R$ be a deterministic test statistic; for the
primary analysis, $S(z)=\widehat{\MMD}_u^2(z)$. Define the exact
upper-tail randomization $p$-value by
\begin{equation}\label{eq:pvalue}
	p(Z)
	=
	2^{-B}
	\sum_{z\in\Omega}
	\mathbf 1\{S(z)\ge S(Z)\}.
\end{equation}

\begin{theorem}[Finite-sample randomization validity]\label{thm:randomization}
Assume well-defined reagent versions, no cross-well interference, swap-invariant block retention and conditional uniformity of $Z$ on $\Omega$. Under the Fisher sharp null that every confirmation well's invariant potential outcome is unchanged by the two conditions, the test rejecting when $p(Z)\le\alpha$ has conditional rejection probability at most $\alpha$ for every $\alpha\in[0,1]$.
\end{theorem}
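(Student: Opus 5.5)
The plan is the standard Fisher randomization argument, carried out conditionally on the design information. I would condition throughout on the unordered eligible wells, their complete potential outcomes, and the pretreatment design information; call this $\sigma$-field $\mathcal F$. By assumption, $Z$ is uniform on $\Omega$ given $\mathcal F$.

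The first step is to show that, under the sharp null, the whole array of observed quotient outcomes is $\mathcal F$-measurable and does not depend on $Z$. Several ingredients combine here.
\begin{itemize}
\item Well-defined reagent versions and no cross-well interference give each well's observed raw outcome as $X=\Gamma\cdot Y(a)$, with $a$ its own assigned condition.
\item Quotient consistency~\eqref{eq:quot-consistency} then gives $M_{\rm well}(X)=M_{\rm well}\{Y(a)\}$. By Theorem~\ref{thm:lattice} and Proposition~\ref{prop:product-diagonal}(1), this holds whatever the unrestricted latent $\Gamma$ is.
\item The sharp null says that $M_{\rm well}\{Y(0)\}=M_{\rm well}\{Y(1)\}$ for every confirmation well. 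Hence each $q_{b,j}$ is a fixed function of $\mathcal F$.
\item Swap-invariant block retention ensures that the set of blocks and the deterministic within-block ordering do not change with $z$.
\end{itemize}
Consequently the pool $\calQ_{\rm pool}$, the median-heuristic bandwidth $\sigma$, the kernel $k$, and therefore the entire map $z\mapsto S(z)$ are fixed given $\mathcal F$. Only the argument $Z$ is random.

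The second step is the counting argument. Write $s_z=S(z)$ for the $2^B$ fixed values. Define $F(t)=2^{-B}\#\{z:s_z\ge t\}$, so that $p(Z)=F(s_Z)$ with $Z$ uniform on $\Omega$. Fix $\alpha$ and consider the set $R=\{z:F(s_z)\le\alpha\}$.
\begin{itemize}
\item If $R$ is empty, the conditional rejection probability is $0$.
\item Otherwise, let $t^\ast=\min_{z\in R}s_z$. Since $F$ is nonincreasing, $R=\{z:s_z\ge t^\ast\}$.
\item Then $\Prob(Z\in R\mid\mathcal F)=|R|/2^B=F(t^\ast)\le\alpha$, because $t^\ast$ is attained by some $z\in R$.
\end{itemize}
This is the usual super-uniformity of a $p$-value computed by ranking against the full reference set; ties are handled by the $\ge$ in~\eqref{eq:pvalue}. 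Averaging over $\mathcal F$ would also give the unconditional statement, but the theorem asks only for the conditional bound.

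I expect the main obstacle to be the first step rather than the counting argument. The difficulty is checking carefully that nothing in $S$ secretly depends on $Z$ beyond the label permutation. Three features need explicit justification:
\begin{itemize}
\item the bandwidth is computed from the label-free pool, which is invariant under paired swaps;
\item the ordering of the two wells within a block is fixed before their conditions are observed;
\item the latent transformations $\Gamma$, which may depend on treatment, enter only through $M_{\rm well}$, where they are annihilated exactly.
\end{itemize}
The last point is where invariance, and not any distributional assumption on $\Gamma$, does the essential work. I would state it as a separate lemma before the counting argument.
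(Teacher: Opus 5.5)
Your proposal is correct and follows the paper's own proof. Under the sharp null, invariance fixes all quotient outcomes and the label-free bandwidth over $\Omega$, and a super-uniformity count then finishes the argument using uniformity of $Z$; your threshold set $R=\{z:s_z\ge t^\ast\}$ is an explicit version of the paper's observation that at most $\lfloor\alpha 2^B\rfloor$ assignments have upper-tail rank at most $\lfloor\alpha 2^B\rfloor$.
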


\begin{proof}
Under the sharp null, all quotient outcomes and label-invariant tuning parameters are fixed over $\Omega$. Let $N=2^B$ and $r(z)=\#\{u\in\Omega:S(u)\ge S(z)\}$. Then $p(z)=r(z)/N$. For each integer $m$, at most $m$ assignments have $r(z)\le m$; ties can only increase the upper-tail rank. Taking $m=\lfloor\alpha N\rfloor$ and using uniformity proves the result.
\end{proof}

The theorem is exact conditional on the reported assignment mechanism and is an instance of finite-sample randomization inference \citep{Fisher1935,LehmannRomano2005}. The metadata audit can establish that paired siRNAs occur within the same plate randomization set; it cannot itself prove that the original allocation algorithm was uniform. The primary contrast uses one quotient-pixel test. Nine additional contrasts form a separate secondary family adjusted by Holm's method \citep{Holm1979}.

\section{Simulation study}\label{sec:simulation}

Each of 250 replicates at each effect strength contains eight randomized blocks, two units per block, two sites per unit, six channels and $28\times28$ intrinsic lattice arrays. Treatment adds a concentric annulus with strength $\eta\in\{0,0.35,0.70,1.00\}$. At $\eta=0$, the two treatment potential outcomes of each simulated
unit are bytewise identical; different units may nevertheless have
different intrinsic morphologies. After assignment, each site receives an integer shift and quarter turn from a deterministic generator depending on treatment, realized morphology, unit identity and a frozen replicate seed. The analyzer receives neither the group elements nor their generator inputs. All $2^8=256$ paired assignments are enumerated.

\begin{table}[t]
\caption{Paired-swap rejection fractions under treatment- and outcome-dependent acquisition transformations. The interval at $\eta=0$ is the exact 95\% Clopper--Pearson interval \citep{ClopperPearson1934} over 250 independent replicates.\label{tab:simulation}}
\centering\small
\begin{tabular}{@{}lccccc@{}}
\toprule
Method & $\eta=0$ & $0.35$ & $0.70$ & $1.00$ & Assignments\\
\midrule
Exact quotient & 0.052 (0.028, 0.087) & 0.140 & 0.848 & 0.992 & 256\\
Raw pixels & 1.000 (0.985, 1.000) & 1.000 & 1.000 & 1.000 & 256\\
Moment registration & 0.056 (0.031, 0.092) & 0.540 & 0.996 & 1.000 & 256\\
Finite Euler signature & 0.040 (0.019, 0.072) & 0.184 & 1.000 & 1.000 & 256\\
\bottomrule
\end{tabular}
\end{table}

For the exact quotient and post-canonical Euler endpoints, positive-$\eta$ columns estimate power against a known intrinsic effect. Raw-pixel and moment-registration values are diagnostic rejection fractions because those endpoints are not invariant under the informative acquisition mechanism.

\begin{figure}[!htbp]
\centering
\begin{tikzpicture}
\begin{axis}[
width=.82\linewidth,height=5.2cm,
xlabel={Intrinsic morphological effect strength},
ylabel={Paired-swap rejection fraction},
xmin=0,xmax=1,ymin=0,ymax=1.03,
legend style={font=\scriptsize,at={(0.02,0.98)},anchor=north west},
grid=major]
\addplot+[mark=*,thick] coordinates {(0,0.052) (0.35,0.140) (0.70,0.848) (1,0.992)};
\addlegendentry{Exact quotient}
\addplot+[mark=square*,thick] coordinates {(0,1) (0.35,1) (0.70,1) (1,1)};
\addlegendentry{Raw pixels}
\addplot+[mark=triangle*,thick] coordinates {(0,0.056) (0.35,0.540) (0.70,0.996) (1,1)};
\addlegendentry{Moment registration}
\addplot+[mark=diamond*,thick] coordinates {(0,0.040) (0.35,0.184) (0.70,1) (1,1)};
\addlegendentry{Euler secondary}
\addplot+[dashed] coordinates {(0,0.05) (1,0.05)};
\addlegendentry{Nominal 0.05}
\end{axis}
\end{tikzpicture}
\caption{Simulation rejection fractions. Finite-sample causal validity
	under informative acquisition pertains to the quotient and
	post-canonical Euler endpoints; the other methods are diagnostics.
	The exact quotient begins near the nominal 0.05 level and rises to
	0.992, raw pixels remain at 1.0, and moment registration and the Euler
	endpoint begin near 0.05 and approach 1.0 as effect strength increases.
	\label{fig:simulation}}
\end{figure}

\FloatBarrier
\section{RxRx1 confirmation experiment}\label{sec:rxrx1}

\subsection{Units, selection and imposed acquisition}

The public RxRx1 release \citep{Recursion2023,Sypetkowski2023} contains 125,510 six-channel $512\times512$ images, with two nonoverlapping sites per well, 51 experimental batches, four cell types and 1,138 siRNAs, including 1,108 noncontrol perturbations. HUVEC contributes 24 batches. The well is the experimental unit; sites and pixels are components of one structured outcome, not independent replicates. The conditions are applied siRNA reagents, so the estimand is reagent-specific rather than automatically gene-specific.

HUVEC-01 through HUVEC-12 are used for discovery, HUVEC-13 through HUVEC-16 are reserved, and HUVEC-17 through HUVEC-24 supply eight confirmation blocks. Selection uses public 128-dimensional embeddings from discovery rows only. A split-half score combines within-half signal-to-noise with agreement of contrast direction. Eligibility requires one two-site well per condition in every discovery batch and identical plate signatures. Greedy matching yields ten disjoint pairs; the highest score defines the primary pair and nine pairs form the secondary family. The row split does not prove that the public embedding generator was trained independently of every confirmation image, so complete algorithmic independence remains an assumption.

The frozen confirmation subset has 160 wells, 320 sites and 1,920 single-channel PNG files. Each site is downsampled to $64\times64$, retained as 8-bit data and zero-padded. For each published nuisance seed, a BLAKE2b digest of the seed, well identifier and site determines a quarter turn and a signed translation. The transformation also depends on treatment and a coordinate-free morphology score. All channels at one site share a motion, while the two sites receive separate motions. Parameters are logged for audit but excluded from every representation and test routine; the software aborts if nonzero support is cropped.

\subsection{Confirmatory results}

The primary contrast is siRNA 384 (s38230) versus siRNA 747 (s37149). The completeness and same-plate gates passed for all frozen pairs and blocks. The discovery-only manifest has SHA-256 digest \texttt{b6f350ca96fb9e7a6271845dd6e432ed9dca556144a1a332d8f5af168642b968}.

\begin{table}[!htbp]
\caption{Primary RxRx1 pair under imposed lattice rigid motions. The quotient-pixel endpoint is the sole primary test; Euler is secondary; moment registration and raw pixels are acquisition-sensitive diagnostics.\label{tab:primary}}
\centering
\begin{tabular}{@{}lrr@{}}
\toprule
Representation & $\widehat{\MMD}_u^2$ & Paired-swap $p$\\
\midrule
Canonical quotient pixels & 0.0036 & 0.0078\\
Finite directional Euler signature & 0.7535 & 0.0078\\
Moment registration & -0.0065 & 0.2109\\
Raw transformed pixels & 0.1664 & 0.0078\\
\bottomrule
\end{tabular}
\end{table}

No secondary contrast remained significant after Holm adjustment. Exact invariance held across all acquisition seeds: maximum clean-versus-contaminated feature discrepancies were zero for the canonical pixels and Euler signature, as were the largest changes in quotient $\widehat{\MMD}_u^2$ and exact $p$-value. The algebraic and software audit found zero canonical-code failures among 6,400 generated transformations; 29 software checks passed.

\begin{table}[!htbp]
\caption{Canonical-quotient tests for the nine prespecified secondary contrasts. These contrasts are not biological replications of the primary contrast.\label{tab:secondary}}
\centering\scriptsize
\begin{tabular}{@{}clrrr@{}}
\toprule
Pair & Conditions & $\widehat{\MMD}_u^2$ & Swap $p$ & Holm $p$\\
\midrule
1 & s21474 vs s21486 & 0.0040 & 0.0859 & 0.1719\\
2 & s29342 vs s28343 & 0.0045 & 0.0156 & 0.0703\\
3 & s18592 vs s18763 & 0.1266 & 0.0078 & 0.0703\\
4 & s20782 vs s21523 & 0.0031 & 0.0078 & 0.0703\\
5 & s20401 vs s19178 & 0.0228 & 0.0156 & 0.0703\\
6 & s29371 vs s28116 & 0.0063 & 0.0078 & 0.0703\\
7 & s17831 vs s20468 & 0.0486 & 0.0078 & 0.0703\\
8 & s37483 vs s224918 & -0.0107 & 0.1562 & 0.1719\\
9 & s37712 vs s224940 & 0.0761 & 0.0078 & 0.0703\\
\bottomrule
\end{tabular}
\end{table}

\begin{table}[t]
\caption{Acquisition-only sharp-null stress test over frozen pairs and nuisance seeds. Intervals are descriptive Clopper--Pearson summaries; pair--seed tests share biological blocks.\label{tab:sharpnull}}
\centering\small
\begin{tabular}{@{}lrrl@{}}
\toprule
Method & Tests & Rejections & Fraction (95\% interval)\\
\midrule
Exact quotient & 200 & 0 & 0.000 (0.000, 0.018)\\
Finite Euler signature & 200 & 0 & 0.000 (0.000, 0.018)\\
Moment registration & 200 & 0 & 0.000 (0.000, 0.018)\\
Raw pixels & 200 & 200 & 1.000 (0.982, 1.000)\\
\bottomrule
\end{tabular}
\end{table}

\begin{table*}[t]
\caption{Approximate-action stress test for the primary pair. Feature error is relative to the exactly transformed baseline. These are empirical sensitivity results; Theorem~\ref{thm:stability} applies only when its Lipschitz assumptions are verified.\label{tab:approx}}
\centering\scriptsize
\begin{tabular}{@{}llrrr@{}}
\toprule
Perturbation & Method & Relative feature error & $|\Delta\widehat{\MMD}_u^2|$ & $|\Delta p|$\\
\midrule
Rotation $5^\circ$ & Exact quotient & 0.8432 & 0.0012 & 0.0000\\
 & Euler signature & 0.0779 & 0.0860 & 0.0000\\
 & Moment registration & 0.7295 & 0.0014 & 0.0234\\
 & Raw pixels & 0.6540 & 0.0013 & 0.0000\\
Rotation $10^\circ$ & Exact quotient & 0.8967 & 0.0022 & 0.0000\\
 & Euler signature & 0.0827 & 0.0530 & 0.0000\\
 & Moment registration & 0.8124 & 0.0021 & 0.1250\\
 & Raw pixels & 0.8038 & 0.0012 & 0.0000\\
Rotation $20^\circ$ & Exact quotient & 0.9715 & 0.0005 & 0.0000\\
 & Euler signature & 0.1076 & 0.0898 & 0.0000\\
 & Moment registration & 0.8662 & 0.0009 & 0.0469\\
 & Raw pixels & 0.8802 & 0.0014 & 0.0000\\
Two-pixel support crop & Exact quotient & 0.8253 & 0.0026 & 0.0000\\
 & Euler signature & 0.1478 & 0.0164 & 0.0000\\
 & Moment registration & 0.6857 & 0.0203 & 0.2031\\
 & Raw pixels & 0.3523 & 0.0001 & 0.0000\\
Gaussian noise & Exact quotient & 0.5949 & 0.0007 & 0.0000\\
 & Euler signature & 0.0618 & 0.0244 & 0.0000\\
 & Moment registration & 0.3870 & 0.0067 & 0.1328\\
 & Raw pixels & 0.1775 & 0.0006 & 0.0000\\
\bottomrule
\end{tabular}
\end{table*}

\FloatBarrier
\section{Discussion}\label{sec:discussion}

Under unrestricted contamination, invariance is an observability requirement rather than a modeling preference. A noninvariant target can change while the observed law remains fixed; a maximal invariant captures the entire recoverable orbit. Causal identification then requires a separate design or exchangeability argument. The losslessness theorem adds a third distinction: even a maximal invariant may discard information about how probability is distributed within an orbit. Only a parameter-free conditional law, or a condition such as conditional Haar randomization that implies one, makes quotienting sufficient for the full statistical experiment.

The distinction between product and diagonal actions has practical consequences. Sitewise canonicalization is correct for the imposed RxRx1 mechanism because each site receives a separate motion. If a microscope applies one common transformation to all sites, a diagonal quotient should retain relative placement and orientation. Treating that problem as a product action would erase scientifically meaningful relational structure.

The approximate-contamination theorem links orbit error to quotient-law and MMD error, but it also clarifies the limits of the current stress test. Lexicographic canonicalization can be discontinuous. Large feature changes accompanied by stable $p$-values in Table~\ref{tab:approx} demonstrate test-level robustness for the frozen perturbations, not global representation stability. Establishing a Lipschitz quotient embedding on a scientifically relevant image class is a separate research problem.

The RxRx1 analysis is a controlled unknown-acquisition experiment on real biological images. Its content and batch structure are real; the transformation law is imposed and exactly audited. The causal conclusion is conditional on the reported within-plate assignment mechanism, well-defined reagent versions, no cross-well interference and swap-invariant retention. The primary result concerns a reagent-condition contrast, not automatically a gene-level mechanism. With eight pairs, exact enumeration is a strength, but the randomization distribution has coarse resolution.

\section{Conclusion}\label{sec:conclusion}

Unknown transformations need not be estimated to conduct causal inference on the information that survives them. Under a measurable group action, orbit invariance is necessary and sufficient for uniform observability, and a maximal invariant contains every measurable observable target. Quotient reduction is statistically lossless only under the additional, exact boundary of a parameter-free conditional law given the quotient; conditional Haar contamination is one useful special case. Correct specification of product or diagonal actions determines whether relative component information is preserved. The lattice construction, characteristic quotient kernel and complete paired randomization test turn these distinctions into an auditable workflow for structured outcomes under informative acquisition.

\section*{Data availability}
The analysis uses the public RxRx1 dataset subject to its stated licence. The accompanying software is described by the reproducibility statement below. Repository or archival identifiers will be added when the public code archive is released.

\section*{Reproducibility statement}
The analysis software downloads the frozen RxRx1 subset, verifies file and manifest hashes, constructs all representations, enumerates the complete assignment distribution, performs multiplicity adjustment and sensitivity analyses, executes unit tests and writes the numerical outputs consumed by the manuscript. The source fingerprint reported for the analysed version is \texttt{source-\allowbreak e9f042c387cf}.

\section*{Author contributions}
Usef Faghihi and Amir Saki contributed equally to the conception,
methodology, theoretical development, analysis, interpretation of
results and preparation of the manuscript. Both authors reviewed and
approved the final manuscript.

\section*{Funding}
This research received no specific grant from any funding agency in
the public, commercial or not-for-profit sectors.
\section*{Acknowledgements and use of AI-assisted tools}
During manuscript preparation, the authors used OpenAI Codex for language editing, theorem-structure review and LaTeX reformatting. The authors are responsible for verifying all mathematical statements, empirical results, citations and final wording.

\appendix
\renewcommand{\thesection}{\Alph{section}}
\renewcommand{\theHsection}{appendix.\Alph{section}}
\section{Proof and measurability details}\label{sec:appendix}

\subsection{Proof of Theorem~\ref{thm:factorization}}

\begin{proof}
	Fix an arbitrary Borel set \(B\subseteq\calT\), and write
	\[
	E:=h^{-1}(B).
	\]
	Since \(h\) is Borel, both \(E\) and \(\calY\setminus E\) are Borel subsets of
	\(\calY\). We first verify explicitly that \(E\) is saturated with respect to
	the fibres of \(M\), that is,
	\[
	M^{-1}\{M(E)\}=E.
	\]
	The inclusion \(E\subseteq M^{-1}\{M(E)\}\) is immediate. Conversely, suppose
	that \(y\in M^{-1}\{M(E)\}\). Then there exists \(z\in E\) such that
	\(M(y)=M(z)\). By maximality of \(M\), there is some \(g\in G\) for which
	\(y=g\cdot z\). Invariance of \(h\) therefore gives
	\[
	h(y)=h(g\cdot z)=h(z)\in B,
	\]
	and hence \(y\in E\). This proves the asserted saturation. Notice that this
	statement concerns the particular set \(E=h^{-1}(B)\): it says that every
	fibre of \(M\) is either entirely contained in \(E\) or entirely contained in
	its complement; it does not assert that the fibres of \(M\) are singletons.
	
	We next claim that
	\[
	M(E)\cap M(\calY\setminus E)=\varnothing.
	\]
	Indeed, if \(s\) belonged to this intersection, there would exist
	\(y\in E\) and \(z\in\calY\setminus E\) such that
	\[
	M(y)=s=M(z).
	\]
	Maximality of \(M\) would then place \(y\) and \(z\) in the same \(G\)-orbit,
	so invariance of \(h\) would imply \(h(y)=h(z)\). This is impossible because
	\(h(y)\in B\), whereas \(h(z)\notin B\).
	
	Because \(\calY\) and \(\calS\) are standard Borel spaces, and \(M\) is
	Borel, the images \(M(E)\) and \(M(\calY\setminus E)\) are analytic subsets
	of \(\calS\). They need not themselves be Borel, which is why a separation
	argument is required. By Lusin's separation theorem, there exists a Borel
	set \(D_B\subseteq\calS\) satisfying
	\[
	M(E)\subseteq D_B
	\qquad\text{and}\qquad
	M(\calY\setminus E)\subseteq\calS\setminus D_B.
	\]
	We now show that
	\[
	E=M^{-1}(D_B).
	\]
	If \(y\in E\), then \(M(y)\in M(E)\subseteq D_B\), and hence
	\(y\in M^{-1}(D_B)\). If \(y\notin E\), then
	\(M(y)\in M(\calY\setminus E)\subseteq\calS\setminus D_B\), and hence
	\(y\notin M^{-1}(D_B)\). The two inclusions therefore follow, and
	\[
	h^{-1}(B)=E=M^{-1}(D_B)\in\sigma(M).
	\]
	
	Since \(B\subseteq\calT\) was an arbitrary Borel set, \(h\) is
	\(\sigma(M)\)-measurable. The measurable factorization lemma, applicable
	because \(\calT\) is a standard Borel space, therefore yields a Borel map
	\[
	\bar h:\calS\longrightarrow\calT
	\]
	such that
	\[
	h=\bar h\circ M.
	\]
	
	It remains to establish the stated uniqueness. Suppose that
	\(\bar h_1,\bar h_2:\calS\to\calT\) are two Borel maps satisfying
	\[
	h=\bar h_1\circ M=\bar h_2\circ M.
	\]
	For any \(s\in M(\calY)\), choose \(y\in\calY\) with \(M(y)=s\). Then
	\[
	\bar h_1(s)
	=\bar h_1\{M(y)\}
	=h(y)
	=\bar h_2\{M(y)\}
	=\bar h_2(s).
	\]
	Thus \(\bar h_1=\bar h_2\) on \(M(\calY)\). No uniqueness is claimed outside
	\(M(\calY)\), since values of a factor map there do not affect its
	composition with \(M\).
\end{proof}

\subsection{Existence constructions for maximal invariants}

If a compact metrizable group acts continuously on a Polish space, the orbit-valued map $y\mapsto G\cdot y$ is a Borel maximal invariant taking values in the hyperspace of nonempty compact subsets with its Vietoris Borel structure. For a finite group $H=\{h_1,\ldots,h_m\}$ acting by Borel maps, any Borel injection $\nu:\calY\to[0,1]$ gives the canonical representative
\[
c_H(y)=h_{j^\star}\cdot y,\qquad j^\star=\min\argmin_j\nu(h_j\cdot y),
\]
which is Borel and maximal. The lattice construction combines an explicit cross-section for noncompact translations with this finite residual minimization.

\subsection{Measurability of the Haar orbit kernel}

For compact second-countable $G$, a jointly Borel action and measurable section make $m\mapsto K_m(B)$ in~\eqref{eq:haar-kernel} measurable for every Borel $B$ by integration of the jointly measurable indicator $(g,m)\mapsto\1\{g\cdot s(m)\in B\}$. Stabilizers may change the map from $G$ onto an orbit but do not change the pushforward orbit measure.

\section*{Figure caption list}
\noindent Figure~\ref{fig:simulation}. Simulation rejection fractions across intrinsic effect strengths; causal validity under informative acquisition pertains to invariant endpoints.

\section*{Table caption list}
\noindent Table~\ref{tab:simulation}. Simulation rejection fractions.\\
Table~\ref{tab:primary}. Primary RxRx1 contrast.\\
Table~\ref{tab:secondary}. Prespecified secondary contrasts.\\
Table~\ref{tab:sharpnull}. Acquisition-only sharp-null stress test.\\
Table~\ref{tab:approx}. Approximate-action stress test.


\begin{thebibliography}{99}

\bibitem{Bhattacharjee2025}
Bhattacharjee, S., Li, B., Wu, X. and Xue, L. (2025) Doubly robust estimation of causal effects for random object outcomes with continuous treatments. \emph{arXiv:2506.22754}.

\bibitem{Blackwell1953}
Blackwell, D. (1953) Equivalent comparisons of experiments. \emph{Ann. Math. Statist.}, 24, 265--272.

\bibitem{ClopperPearson1934}
Clopper, C. J. and Pearson, E. S. (1934) The use of confidence or fiducial limits illustrated in the case of the binomial. \emph{Biometrika}, 26, 404--413.

\bibitem{Curry2022}
Curry, J., Mukherjee, S. and Turner, K. (2022) How many directions determine a shape and other sufficiency results for two topological transforms. \emph{Trans. Amer. Math. Soc. Ser. B}, 9, 1006--1043.

\bibitem{Eaton1989}
Eaton, M. L. (1989) \emph{Group Invariance Applications in Statistics}. Institute of Mathematical Statistics, Hayward, CA.

\bibitem{Fisher1935}
Fisher, R. A. (1935) \emph{The Design of Experiments}. Oliver and Boyd, Edinburgh.

\bibitem{Gretton2012}
Gretton, A., Borgwardt, K. M., Rasch, M. J., Sch\"olkopf, B. and Smola, A. (2012) A kernel two-sample test. \emph{J. Mach. Learn. Res.}, 13, 723--773.

\bibitem{Holm1979}
Holm, S. (1979) A simple sequentially rejective multiple test procedure. \emph{Scand. J. Stat.}, 6, 65--70.

\bibitem{Kallenberg2021}
Kallenberg, O. (2021) \emph{Foundations of Modern Probability}, 3rd edn. Springer, Cham.

\bibitem{Kechris1995}
Kechris, A. S. (1995) \emph{Classical Descriptive Set Theory}. Springer, New York.

\bibitem{Kurisu2024}
Kurisu, D., Zhou, Y., Otsu, T. and M\"uller, H.-G. (2024) Geodesic causal inference. \emph{arXiv:2406.19604}.

\bibitem{LeCam1986}
Le Cam, L. (1986) \emph{Asymptotic Methods in Statistical Decision Theory}. Springer, New York.

\bibitem{LehmannRomano2005}
Lehmann, E. L. and Romano, J. P. (2005) \emph{Testing Statistical Hypotheses}, 3rd edn. Springer, New York.

\bibitem{Otsu1979}
Otsu, N. (1979) A threshold selection method from gray-level histograms. \emph{IEEE Trans. Syst. Man Cybern.}, 9, 62--66.

\bibitem{Raykov2025}
Raykov, Y. P., Luo, H., Strait, J. D. and KhudaBukhsh, W. R. (2025) Kernel-based estimators for functional causal effects. \emph{arXiv:2503.05024}.

\bibitem{Recursion2023}
[dataset] Recursion (2023) RxRx1: an image set for cellular morphological variation across many biological perturbations. \url{https://www.rxrx.ai/rxrx1}.

\bibitem{Robins1986}
Robins, J. M. (1986) A new approach to causal inference in mortality studies with a sustained exposure period. \emph{Math. Modelling}, 7, 1393--1512.

\bibitem{RosenbaumRubin1983}
Rosenbaum, P. R. and Rubin, D. B. (1983) The central role of the propensity score in observational studies for causal effects. \emph{Biometrika}, 70, 41--55.

\bibitem{Rubin1974}
Rubin, D. B. (1974) Estimating causal effects of treatments in randomized and nonrandomized studies. \emph{J. Educ. Psychol.}, 66, 688--701.

\bibitem{Shin2024}
Shin, H.-Y., Kim, K., Lee, K. and Oh, H.-S. (2024) Absolute average and median treatment effects as causal estimands on metric spaces. \emph{arXiv:2407.03726}.

\bibitem{Sriperumbudur2010}
Sriperumbudur, B. K., Gretton, A., Fukumizu, K., Sch\"olkopf, B. and Lanckriet, G. R. G. (2010) Hilbert space embeddings and metrics on probability measures. \emph{J. Mach. Learn. Res.}, 11, 1517--1561.

\bibitem{Sypetkowski2023}
Sypetkowski, M. et al. (2023) RxRx1: a dataset for evaluating experimental batch correction methods. \emph{arXiv:2301.05768}.

\bibitem{Torgersen1991}
Torgersen, E. (1991) \emph{Comparison of Statistical Experiments}. Cambridge University Press, Cambridge.

\bibitem{Turner2014}
Turner, K., Mukherjee, S. and Boyer, D. M. (2014) Persistent homology transform for modeling shapes and surfaces. \emph{Information and Inference}, 3, 310--344.

\bibitem{Wijsman1990}
Wijsman, R. A. (1990) \emph{Invariant Measures on Groups and Their Use in Statistics}. Institute of Mathematical Statistics, Hayward, CA.

\end{thebibliography}
\end{document}